\setlist[itemize]{label=--}
\setlist[enumerate]{label=(\arabic*),labelindent=\parindent,leftmargin=*}
\definecolor{citecolor}{HTML}{0000C0}
\definecolor{urlcolor}{HTML}{000080}
\newtheorem{theorem}{Theorem}
\newtheorem{lemma}[theorem]{Lemma}
\newtheorem{fact}[theorem]{Fact}
\newcommand{\Gsigma}{\mathcal{G}_\sigma}
\newcommand\janne[2][]{}
\newcommand\stefan[2]{}
\newcommand\nnote[2][]{}
\newcommand\alex[2]{}
\newcommand{\avgdeg}{\Delta_{\text{avg}}}
\newcommand{\demandbalancingalg}{\textsc{Demand Balancing}}
\newcommand{\optimizedthreshalg}{\textsc{Threshold Balancing}}
\newcommand{\sqrtalg}{\textsc{Split}\&\textsc{Join}}
\newcommand{\helperalg}{\textsc{Helper}}
\newcommand{\fixeddegalg}{\textsc{Fixed-degree}}
\newcommand{\steinernodealg}{\textsc{Steiner Node Insertion}}
\newcommand{\randtreealg}{\textsc{Random Tree}}
\newcommand{\randgraphalg}{\textsc{Random Graph}}
\newcommand{\greedyselectalg}{\textsc{Greedy Edge Selection}}
\newcommand{\greedydeletealg}{\textsc{Greedy Edge Deletion}}
\newcommand{\hybriddeletealg}{\textsc{Hybrid Edge Deletion}}
\newcommand{\demandbalancingalgshort}{\textsc{DB}}
\newcommand{\optimizedthreshalgshort}{\textsc{TB}}
\newcommand{\sqrtalgshort}{\textsc{S}\&\textsc{J}}
\newcommand{\helperalgshort}{\textsc{Help}}
\newcommand{\fixeddegalgshort}{\textsc{Fixed-deg.}}
\newcommand{\steinernodealgshort}{\textsc{SNI}}
\newcommand{\randtreealgshort}{\textsc{Rnd. Tree}}
\newcommand{\randgraphalgshort}{\textsc{Rnd. Graph}}
\newcommand{\greedyselectalgshort}{\textsc{GES}}
\newcommand{\greedydeletealgshort}{\textsc{GED}}
\newcommand{\hybriddeletealgshort}{\textsc{HED}}
\begin{document}

\title{Demand-Aware Network Design with Steiner Nodes\\and a Connection to Virtual Network Embedding}

\author[1]{Aleksander Figiel}
\affil[1]{TU Berlin, Germany}
\author[1]{Janne H.\ Korhonen}
\author[2]{Neil Olver}
\affil[2]{London School of Economics and Political Science, United Kingdom}
\author[1]{Stefan Schmid}

\date{}

\maketitle

\begin{abstract}
Emerging optical and virtualization technologies enable the design of more flexible and demand-aware networked systems, in which resources can be optimized toward the actual workload they serve. For example, in a demand-aware datacenter network, frequently communicating nodes (e.g., two virtual machines or a pair of racks in a datacenter) can be placed topologically closer, reducing communication costs and hence improving the overall network performance.

This paper revisits the \emph{bounded-degree network design problem} underlying such demand-aware networks. Namely, given a distribution over communicating server pairs, we want to design a network with bounded maximum degree that minimizes expected communication distance. In addition to this known problem, we introduce and study a variant where we allow Steiner nodes (i.e., additional routers) to be added to augment the network.

We improve the understanding of this problem domain in several ways. First, we shed light on the complexity and hardness of the aforementioned problems, and study a connection between them and the virtual networking embedding problem. We then provide a constant-factor approximation algorithm for the Steiner node version of the problem, and use it to improve over prior state-of-the-art algorithms for the original version of the problem with sparse communication distributions. Finally, we investigate various heuristic approaches to bounded-degree network design problem, in particular providing a reliable heuristic algorithm with good experimental performance.

We report on an extensive empirical evaluation, using several real-world traffic traces from datacenters, and find that our approach results in improved demand-aware network designs.
\end{abstract}

\makeatletter
\def\blfootnote{\gdef\@thefnmark{}\@footnotetext}
\makeatother
\blfootnote{This project has received funding from the European Research Council (ERC) under the European Union’s Horizon 2020 research and innovation programme (Grant agreement No. 864228 “AdjustNet”).}

\section{Introduction}

With the popularity of datacentric applications and artificial intelligence, the traffic in datacenters is growing explosively. Accordingly, over the last years, significant research efforts have been made to render datacenter networks and other distributed computing infrastructure more efficient and flexible~\cite{kellerer2019adaptable}. 

An intriguing approach to improve performance in networked and distributed systems, is to make them more demand-aware. In particular, emerging optical communication technologies allow to dynamically change the topology interconnecting the datacenter racks, according to the workload they currently serve~\cite{poutievski2022jupiter,projector}. 
But also virtualization technologies enable demand-awareness, e.g., by allowing to flexibly embed a workload (e.g., processes or virtual machines) on a given infrastructure~\cite{henzinger2019efficient}.

Such demand-awareness is attractive as empirical studies (e.g., from Google~\cite{poutievski2022jupiter}, Microsoft~\cite{ballani2020sirius}, Facebook~\cite{facebook2013}) show that communication traffic features significant spatial locality, which can be exploited for optimization~\cite{sigmetrics20complexity}. 
For example, in a demand-aware network, frequently communicating nodes (e.g., a pair of racks in a datacenter) can be placed topologically closer, reducing communication costs and hence improving the overall performance.  
However, the underlying algorithmic problem of such demand-aware networks is still not well understood today.

This paper revisits the design of efficient demand-aware networks, and in particular the bounded-degree network topology design problem~\cite{avin2020demand}: given a traffic matrix describing the communication demand between pairs of nodes, find a network topology (i.e., a graph) which ``serves this traffic well.'' In particular and informally, the network interconnecting the nodes should be demand-aware in the sense that it provides short paths between frequently communicating node pairs, minimizing the number of hops travelled per bit. The degree bound is motivated by practical constraints and scalability requirements: reconfigurable switches typically have a limited number of ports (or lasers/mirrors)~\cite{osn21}. 

Before we present our results in more detail, let us introduce our model more formally.


\paragraph{Bounded network design.}  Formally, in the \emph{bounded-degree demand-aware network design problem} (or \emph{bounded network design problem} for short), we are given a degree bound $\Delta$, a node set $V = \{ 1, 2, \dotsc, n  \}$ and a probability distribution $p \colon \binom{V}{2} \to [0,1]$ over possible edges over set $V$. The task is to find a graph $G = (V, E)$ with maximum degree $\Delta$ minimizing the \emph{expected path length}
\begin{equation}\label{eq:epl} \sum_{ \{u, v\} \in \binom{V}{2}} p(\{u,v\}) d_G(u,v)\,, 
\end{equation}
where $d_G(u,v)$ denotes the distance between $u$ and $v$ in $G$. We interpret the edges $e \in \binom{V}{2}$ with non-zero probability $p(e)$ as forming a graph $D = (V, E_D)$ that we will refer to as a \emph{demand graph} or \emph{guest graph}. We call the edges of the demand graph \emph{demand edges}. The graph obtained as a solution is called the \emph{host graph}.

We note that compared to the original formulation of the problem given by Avin et al.~\cite{disc17,avin2020demand}, we define the demand distribution over undirected edges instead of directed edges. However, one may easily observe that the two formulations are equivalent: adding the demands on two opposite directed edges to a single undirected edge or splitting the demand of an undirected edge into equal demands on the two corresponding directed edges results in the same objective function.

The state-of-the-art bounded network design solutions~\cite{disc17,avin2020demand} come with the drawback that a constant approximation is only achieved for sparse demand graphs and that the solutions can violate the degree bound $\Delta$ to some extent. The complexity of the general problem and the achievable expected path length are still not well understood. 

\paragraph{Bounded network design with Steiner nodes.}  In this work, we introduce and study a generalization of the bounded network design problem, in a resource augmentation model where we are allowed to use \emph{Steiner nodes} in addition to the nodes of the guest graph. The problem may be of independent interest, but it also turns out to be particularly useful to overcome some of the limitations of existing approaches to the original bounded network design problem. 
Our generalization introduces both an algorithmic and an analytical tool, which will allow us to obtain improved results also for the original model. 
Formally, in the \emph{bounded network design problem with Steiner nodes}, we are given a degree bound $\Delta$, a node set $V = \{ 1, 2, \dotsc, n  \}$ and a probability distribution $p \colon \binom{V}{2} \to [0,1]$ over possible edges over set $V$ as before. The task is to find a graph $G = (V \cup U, E)$ with maximum degree $\Delta$, where $U$ is an arbitrary set disjoint from $V$, minimizing the expected path length \eqref{eq:epl}. 
We refer to nodes in $U$ as \emph{Steiner nodes}. 
For sanity, one may observe that there is always an optimal solution with polynomial number of Steiner nodes (see Lemma~\ref{lemma:steiner-node-bound}).

We show that this variant is NP-hard and can be approximated within factor $2 + o(1)$ in polynomial time. The approximation algorithm will further be useful in designing various heuristic algorithms for the standard version of bounded network design, as we discuss below.

\paragraph{Our contributions.} 
We revisit the design of demand-aware networks from an optimization perspective. 
We first establish an intriguing connection between the problem of designing demand-aware network topologies and the virtual network embedding problem, where the guest graph can be placed on a given (demand-oblivious) network in a demand-aware manner. In particular, we show that the latter does not provide good approximations for the topology design problem: given any  bounded-degree network (e.g., an expander graph) chosen in a demand-oblivious manner, there exist demands for which the optimal embedding into this network costs an $\Omega(\log\log n)$ factor more than the optimal demand-aware embedding, for networks of size $n$

We then aim to overcome the limitations of existing bounded-degree demand-aware network topologies design approaches, and introduce the bounded network design with Steiner nodes problem. We present a constant factor approximation algorithm for this problem, 
and show how it can be used in the context of the original bounded network design problem as well as to design heuristics which overcome the limitations of the state of the art. We further rigorously prove both problems with and without Steiner nodes are NP-hard.

We further report on an extensive empirical evaluation, using several real-world traffic traces from datacenters. We find that our approach is attractive in these scenarios as well, when compared to the state of the art, both in terms of the achieved expected path length and the degree bound. 

\paragraph{Related work.} 
Demand-aware networks are motivated by their applications in datacenters, where reconfigurable optical communication technologies recently introduced unprecedented flexibilities~\cite{projector,zhang2021gemini,apocs21renets,sigmetrics22cerberus,zerwas2023duo,firefly,fleet,flexspander}.
Demand-aware networks have also been studied in the context of virtualized infrastructures, and especially the virtual network embedding problem has been subject to intensive research over the last decade~\cite{rost2019virtual,figiel2021optimal,rost2019virtual,chowdhury2009virtual,henzinger2019efficient}.

The bounded-degree demand-aware network design problem
was introduced by Avin et al.~at DISC 2017 \cite{disc17,avin2020demand}.
The authors show that the problem is related to demand-aware data structures, such as biased binary search trees \cite{splaytrees}, and provide a constant-approximation algorithm for sparse demands in a constant degree-augmentation model. That  paper builds upon ideas from  SplayNets~\cite{schmid2015splaynet} and there already exists much follow-up work, e.g., on designs accounting for robustness~\cite{avin2018rdan},  congestion~\cite{avin2022demand}, or scenarios where the demand-aware network is built upon a demand-oblivious network~\cite{avin2021renets}, among many other~\cite{hall2021survey}.
In this paper, we improve upon the bounds of \cite{disc17,avin2020demand}, even in a more general model. We are also not aware on any algorithm engineering approaches to render these algorithms more efficient in practice.

\paragraph{Organization.} 
The remainder of this paper is organized as follows.
After introducing preliminaries in Section~\ref{sec:preliminaries},
we present hardness results and lower bounds in Section~\ref{sec:complexity}.
Our approximation approach with Steiner nodes is presented in Section~\ref{sec:steiner-apx}. In 
Section~\ref{sec:demand-balancing-model} we discuss applications of the Steiner node approach for the original bounded network design problem without Steiner nodes.
We report on our empirical results from experiments on datacenter traffic traces in Section~\ref{sec:experiments} and conclude in
Section~\ref{sec:conclusion}.


\section{Preliminaries}\label{sec:preliminaries}

\paragraph{Basic notation.} Let $V$ be a node set and $p \colon \binom{V}{2} \to [0,1]$ a probability distribution given as input for the bounded network design problem. For nodes $v, u \in V$, we define
\[
p(v)  = \sum_{u \in V} p(\{v,u\})\,, \hspace{20mm}  p_v(u)  = p(\{v,u\} )/p(v)\,.
\]


\paragraph{Entropy.} Let $X$ be a random variable with a finite domain $\mathcal{X}$. Recall that the \emph{entropy} and the \emph{base-$d$ entropy} of $X$, respectively, are 
\[ H(X) = -\sum_{x \in \mathcal{X}} \Pr(X = x) \log_2 \Pr(X = x)\quad\text{and}\quad H_d(X) = -\sum_{x \in \mathcal{X}} \Pr(X = x) \log_d \Pr(X = x)\,.\]
For any $d$, we have $H_d(X) = H(X)/\log_2 d$. When $X$ is distributed according to $p \colon \mathcal{X} \to [0,1]$, we abuse the notation by writing $H(p)$ for the entropy of $X$.

For random variables $X$ and $Y$ with finite domains $\mathcal{X}$ and $\mathcal{Y}$, respectively, the (base-$d$) \emph{entropy of $Y$ conditioned on $X$} is
\[H_d(Y \mid X) = - \sum_{x \in \mathcal{X}} \sum_{y \in \mathcal{Y}} \Pr\bigl((X,Y) = (x,y)\bigr) \log_d \Pr(Y = y \mid X = x)\,.\]

\paragraph{Entropy bound on trees.} Consider a setting where we have a root node $r$, a set of nodes $V$, and demands between the root and nodes in $V$ given by a probability distribution $p \colon V \to [0,1]$, i.e., demand between $r$ and $v \in V$ is $p(v)$. Suppose we want a tree rooted at $r$ and including all nodes of $V$ such that the expected path length between $r$ and $V$, i.e., $\sum_{v \in V} p( v) d_T(r,v)$ is minimized. It is known that this is closely related to the entropy $H(p)$.

\begin{lemma}[\cite{avin2020demand,mehlhorn1975nearly}]
Let $\{ r \} \cup V$ be a set of nodes and let $p \colon V \to [0,1]$ be a probability distribution. Then for any $d$-ary tree with root $r$ and node set including $\{ r \} \cup V$, it holds that
\[ \sum_{v \in V} p( v) d_T(r,v) \ge H_{d+1}(p) - 1\,. \]
\end{lemma}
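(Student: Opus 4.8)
The plan is to convert the tree into a prefix-free code and then invoke the classical Shannon source-coding lower bound (Kraft's inequality together with concavity of the logarithm). Throughout, I read ``$d$-ary tree'' as a rooted tree in which every node has at most $d$ children. First, given such a tree $T$ rooted at $r$ whose node set includes $\{r\}\cup V$, fix an arbitrary ordering of the (at most $d$) children of each internal node, and work over the alphabet $\Sigma=\{1,\dots,d\}\cup\{\#\}$ of size $d+1$. Assign to every node $v$ the string $c(v)\in\Sigma^{*}$ obtained by recording, step by step along the path from $r$ to $v$, the index of the child that is entered, and then appending the terminator $\#$. Then $|c(v)| = d_T(r,v)+1$, and $v\mapsto c(v)$ is injective on the node set of $T$ since distinct nodes have distinct root-to-node paths.

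Next I would check that $\{c(v) : v\in V\}$ is prefix-free. Suppose $c(v)$ is a prefix of $c(w)$ with $v\neq w$; since $\#$ occurs in each codeword only as its final symbol, the prefix relation forces $c(v)=c(w)$, contradicting injectivity. Thus $\{c(v):v\in V\}$ is a prefix-free code over an alphabet of size $d+1$, so Kraft's inequality gives
\[ \sum_{v\in V}(d+1)^{-(d_T(r,v)+1)} \;\le\; 1. \]

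Finally, writing $\ell(v)=d_T(r,v)+1$ and dropping the (finitely many) $v$ with $p(v)=0$, the standard entropy bound on expected codeword length applies: by concavity of $\log$ (Gibbs' inequality),
\[ H_{d+1}(p) - \sum_{v\in V} p(v)\,\ell(v) \;=\; \sum_{v\in V} p(v)\log_{d+1}\!\frac{(d+1)^{-\ell(v)}}{p(v)} \;\le\; \log_{d+1}\!\Bigl(\sum_{v\in V}(d+1)^{-\ell(v)}\Bigr) \;\le\; 0, \]
so $\sum_{v\in V} p(v)\,\ell(v)\ge H_{d+1}(p)$. Substituting $\ell(v)=d_T(r,v)+1$ and rearranging yields $\sum_{v\in V} p(v)\,d_T(r,v)\ge H_{d+1}(p)-1$, which is the claim.

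I do not expect a genuine obstacle here: the argument is the textbook Kraft/Shannon one. The single point that needs care is the role of the terminator symbol $\#$ — it is exactly what keeps the code prefix-free when some nodes of $V$ are internal nodes of $T$ (or when $T$ carries extra, unlabeled nodes), and it is responsible both for the additive $-1$ and for the entropy being taken in base $d+1$ rather than base $d$. Minor variations in what one means by a ``$d$-ary tree'' (for instance, allowing the root $d$ children but other nodes only $d-1$) change only the alphabet size by a constant and leave the argument intact.
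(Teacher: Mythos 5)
Your proof is correct: the encoding of each node by its child-index sequence plus a terminator symbol is injective and prefix-free over an alphabet of size $d+1$, Kraft's inequality and Gibbs' inequality then give $\sum_{v\in V}p(v)\bigl(d_T(r,v)+1\bigr)\ge H_{d+1}(p)$, and this is exactly the claimed bound. The paper does not prove this lemma itself (it is quoted from Avin et al.\ and Mehlhorn), and your terminator-symbol/Kraft argument is precisely the standard derivation behind the cited result, correctly accounting for nodes of $V$ that sit at internal positions and for why the entropy appears in base $d+1$ with the additive $-1$.
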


Note that the above lemma covers cases where nodes in $V$ can appear as either internal nodes or as leaves. If nodes in $V$ are required to appear as leaves only, then this question is equivalent to asking for optimal \emph{prefix codes}.

\begin{lemma}[Huffman coding, e.g.~\cite{2006elements,alistair-huffman}]\label{lemma:huffman}
Let $\{ r \} \cup V$ be a set of nodes with $|V| = n$ and let $p \colon V \to [0,1]$ be a probability distribution. The $d$-ary tree $T$ with root $r$ with leaf set exactly $V$ minimizing the expected path length $\sum_{v \in V} p( v) d_T(r,v)$ can be computed in time $O(n \log n)$ and satisfies
\[ H_d(p) \le \sum_{v \in V} p( v) d_T(r,v) \le H_d(p) + 1\,. \]
\end{lemma}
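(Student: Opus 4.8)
The plan is to prove both the computational bound and the two-sided inequality for Huffman's algorithm. First I would recall that a $d$-ary tree with leaf set exactly $V$ and the requirement that it minimizes the expected path length is precisely the optimal $d$-ary prefix code for the distribution $p$: assigning to each leaf $v$ the string over a $d$-letter alphabet spelled out by the root-to-$v$ path gives a bijection between such trees and prefix-free codes, under which $d_T(r,v)$ is exactly the codeword length $\ell(v)$ and the objective $\sum_v p(v) d_T(r,v)$ is the expected codeword length $\sum_v p(v)\ell(v)$. So the statement reduces to the classical properties of optimal prefix codes.

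For the \emph{lower bound} $\sum_v p(v)\ell(v) \ge H_d(p)$, I would invoke Kraft's inequality: any $d$-ary prefix code satisfies $\sum_v d^{-\ell(v)} \le 1$, and then Gibbs' inequality (nonnegativity of KL divergence) applied to the sub-probability vector $q(v) = d^{-\ell(v)}$ versus $p$ gives $\sum_v p(v)\log_d(1/q(v)) \ge \sum_v p(v)\log_d(1/p(v))$, i.e.\ $\sum_v p(v)\ell(v) \ge H_d(p)$; since this holds for every prefix code it holds for the optimal one. For the \emph{upper bound} $\sum_v p(v)\ell(v) \le H_d(p) + 1$, I would exhibit a feasible code — the Shannon(-Fano) code with $\ell(v) = \lceil \log_d(1/p(v)) \rceil$, which satisfies Kraft's inequality and hence corresponds to a genuine $d$-ary tree with leaves $V$, and has expected length $< H_d(p)+1$; since Huffman's tree is optimal it does no worse. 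Finally, for the \emph{running time}, I would describe Huffman's greedy algorithm in the $d$-ary setting: repeatedly merge the $d$ lowest-probability subtrees into one, using a priority queue, with the standard caveat that one first pads with zero-probability dummy leaves so that $n \equiv 1 \pmod{d-1}$; each of the $O(n)$ merges costs $O(\log n)$ heap operations, giving $O(n\log n)$ total, and a textbook exchange argument shows the greedy choice is optimal.

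The main obstacle — really the only nontrivial point — is the correctness of the $d$-ary Huffman construction, i.e.\ that the greedy merging rule together with the dummy-leaf padding actually produces an optimal tree; this requires the exchange/induction argument showing that (i) some optimal tree has the $d$ least-likely leaves as siblings at maximum depth, and (ii) contracting those siblings reduces to a smaller instance optimally. Since the lemma is explicitly cited to standard references (\cite{2006elements,alistair-huffman}), I would not reproduce this argument in full but simply cite it, and devote the written proof mainly to the entropy inequalities, which follow cleanly from Kraft and Gibbs as sketched above.
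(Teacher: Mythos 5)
Your proposal is correct: the paper does not prove this lemma at all but simply cites it as a classical result, and your sketch (Kraft plus Gibbs for the lower bound, the Shannon code for the upper bound, the heap-based $d$-ary merge with dummy leaves for the $O(n\log n)$ running time, and the exchange argument for optimality) is exactly the standard argument contained in the cited references. No gaps beyond the minor, routinely-handled technicality of zero-probability leaves, which the paper itself also glosses over.
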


\paragraph{Entropy bound on graphs.} For the bounded network design problem, both with and without Steiner nodes, we use the following result to lower bound the cost of an optimal solution.

\begin{theorem}[\cite{avin2020demand}]\label{thm:lower-bound}
Assume we have a degree bound $\Delta$, a node set $V = \{ 1, 2, \dotsc, n  \}$ and a probability distribution $p \colon \binom{V}{2} \to [0,1]$ over possible edges over set $V$. Then for any graph $G = (V \cup U, E)$ with maximum degree $\Delta$, the expected path length between nodes of $V$ satisfies
\[ \sum_{u,v \in V} p(\{u,v\}) d_G(u,v) \ge \frac{1}{2}\sum_{v \in V} p(v) H_{\Delta+1}(p_v) - 1\,.\]
\end{theorem}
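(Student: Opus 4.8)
The plan is to reduce the graph problem to the tree problem and then invoke the first entropy lemma (the Mehlhorn-style bound on $d$-ary trees). Fix a graph $G = (V \cup U, E)$ with maximum degree $\Delta$. For each fixed node $v \in V$, I want to lower-bound the quantity $\sum_{u \in V} p_v(u) \, d_G(v,u)$ by something like $H_{\Delta+1}(p_v) - 1$, and then recombine: since $p(\{u,v\}) = p(v) p_v(u)$, summing $p(v)$ times this bound over all $v$ gives $\sum_{v} p(v)(H_{\Delta+1}(p_v) - 1)$, which after accounting for the fact that each demand edge $\{u,v\}$ gets counted once from $v$'s perspective and once from $u$'s perspective yields the factor $\tfrac12$ in the statement (and the additive $-1$ comes out once since $\sum_v p(v) = 2$, so $\tfrac12 \sum_v p(v) \cdot 1 = 1$).

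The key step is the per-node bound. Consider the breadth-first search tree $T_v$ of $G$ rooted at $v$. This is a tree rooted at $v$ whose node set includes all of $V \cup U \supseteq V$, and crucially $d_{T_v}(v,u) = d_G(v,u)$ for every $u$, since BFS trees preserve distances from the root. Because $G$ has maximum degree $\Delta$, the root $v$ has at most $\Delta$ children in $T_v$ and every other node has at most $\Delta - 1$ children, so $T_v$ is (essentially) a $\Delta$-ary tree — more precisely it is a tree in which the root has degree at most $\Delta$, which is exactly the setting of the first lemma with $d = \Delta$ (the lemma's ``$d$-ary tree with root $r$'' allows the root up to $d$ children, hence the $H_{d+1}$ in its conclusion). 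Applying that lemma with the distribution $p_v$ on $V$ gives
\[
\sum_{u \in V} p_v(u)\, d_G(v,u) \;=\; \sum_{u \in V} p_v(u)\, d_{T_v}(v,u) \;\ge\; H_{\Delta+1}(p_v) - 1\,.
\]
Multiplying by $p(v)$ and summing over $v \in V$, and using $p(v) p_v(u) = p(\{u,v\})$ together with $\sum_{v \in V} p(v) = 2\sum_{\{u,v\}} p(\{u,v\}) = 2$, yields
\[
2 \sum_{\{u,v\} \in \binom{V}{2}} p(\{u,v\})\, d_G(u,v) \;\ge\; \sum_{v \in V} p(v)\, H_{\Delta+1}(p_v) \;-\; 2\,,
\]
and dividing by $2$ gives the claim.

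The main obstacle — really the only subtle point — is making sure the degree bookkeeping in the BFS tree matches the hypothesis of the lemma exactly: one must check that the root $v$ is allowed to have up to $\Delta$ children (it is, since $v$ has degree $\leq \Delta$ in $G$ and all incident edges can go to BFS children) while internal nodes have at most $\Delta - 1$ children (one edge is ``used up'' going back toward the root), so the tree is no bushier than the extremal $(\Delta+1)$-entropy tree the lemma compares against; the presence of Steiner nodes in $U$ is harmless because they only appear as extra internal/leaf nodes in $T_v$ and the lemma explicitly permits the node set to strictly contain $\{r\} \cup V$. Everything else is the arithmetic of conditional probabilities and the observation that $d_G$ is symmetric, which is why each unordered demand pair is charged twice.
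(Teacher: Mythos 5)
Your proof is correct. One point of comparison: the paper itself does not reprove this statement --- it imports the directed-demand lower bound of Avin et al.\ and only explains the bookkeeping that produces the factor $\tfrac12$ when passing to undirected demands (splitting each $p(\{u,v\})$ into two directed demands of weight $\tfrac12 p(\{u,v\})$, so that $\tfrac12\sum_v p(v)H_{\Delta+1}(p_v)$ is the conditional entropy $H_{\Delta+1}(Y\mid X)$). You instead give a self-contained derivation from the paper's first tree-entropy lemma: root a BFS tree of $G$ at each $v$, use that BFS trees preserve distances from the root and that, as a subgraph of $G$, the tree $T_v$ has every node with at most $\Delta$ children, so the lemma with $d=\Delta$ gives $\sum_u p_v(u)d_G(v,u)\ge H_{\Delta+1}(p_v)-1$; multiplying by $p(v)$, summing, and using $\sum_v p(v)=2$ with the double-counting of each unordered pair yields exactly the stated bound. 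This is essentially the original ego-tree argument behind the cited result, so it buys a transparent, self-contained proof in which the role of $H_{\Delta+1}$, the additive $-1$, and the harmlessness of Steiner nodes are all explicit, whereas the paper's route is shorter but leaves the core argument to the reference. Two trivial points you could mention for completeness: nodes $v$ with $p(v)=0$ contribute nothing to either side, and if some demand pair lies in different components of $G$ the left-hand side is infinite, so one may assume the support of $p_v$ lies in $v$'s BFS tree.
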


Note that while there appears to be a factor $\frac{1}{2}$ in the above inequality that is not present in the original formulation of Avin et al.~\cite{avin2020demand}; this is simply due to the fact that we define the demand distribution over undirected edges rather than directed edges. \nnote{Do we need the rest of this? Or can it go in the appendix?} To see how this arises, consider a distribution over directed edges where each edge $(u,v) \in V \times V$ has probability $\frac{1}{2}p(\{ u,v \})$.

The quantity 
\[ \frac{1}{2}\sum_{v \in V} p(v) H_{\Delta+1}(p_v) \]
can be interpreted as a conditional entropy as follows. Let $X$ and $Y$ be random variables over $V$ defined by picking $X$ to be node $v \in V$ with probability $\frac{1}{2}p(v)$, and then picking $Y$ to be a node $u \in V$ with probability $p_v(u)$. By direct calculation one can see that $\Pr( \{ X,Y \} = \{ u, v \}) = p(\{ u,v \})$. Moreover, the entropy of $Y$ conditioned on $X$ is
\[ H_d(Y \mid X) = \frac{1}{2}\sum_{v \in V} p(v) H_{d}(p_v)\,.\] 

\section{Hardness and lower bounds}
\label{sec:complexity}

\subsection{NP-completeness}
Consider the bounded network problem with integer weights in its decision form: is there a solution with expected path length at most some given value $K$?
We show hardness for both the versions with and without Steiner nodes.
\begin{theorem}\label{thm:np-completeness-main}
Bounded network design and bounded network design with Steiner nodes are NP-complete for any fixed $\Delta \ge 2$.
\end{theorem}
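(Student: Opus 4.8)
The plan is to establish membership in NP and then give a polynomial-time reduction from a classical layout problem.

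\emph{Membership in NP.} A certificate is simply a host graph $G$, for which the expected path length is computed in polynomial time via all-pairs shortest paths and compared against $K$. For the version without Steiner nodes this is immediate, since $|V(G)| = n$. For the Steiner-node version the only subtlety is that a priori the host could be huge; but by Lemma~\ref{lemma:steiner-node-bound} there is always an optimal solution with a polynomial number of Steiner nodes, so we may restrict attention to certificates of polynomial size.

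\emph{NP-hardness.} I would reduce from \textsc{Minimum Linear Arrangement} (MinLA), which is a classical NP-hard problem. Given a MinLA instance $(H, K)$ with $H$ on vertex set $\{1, \dots, n\}$, build a demand graph on $\{0, 1, \dots, n, n+1\}$ together with, for each vertex $v$, a private set of \emph{pendant} leaves: $\Delta - 2$ pendants for each $v \in \{1, \dots, n\}$ and $\Delta - 1$ pendants for the two \emph{anchor} vertices $0$ and $n+1$. Put unit demand on every edge of $H$, unit demand on $\{0,i\}$ and $\{n+1,i\}$ for all $i$, and demand $W$ (a sufficiently large polynomial) on every pendant edge. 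The pendants pin down degrees: once $W$ exceeds any possible total contribution of the unit-weight edges, in any near-optimal host every pendant must sit at distance $1$ from its owner, which leaves each original vertex with exactly $2$ free ports and each anchor with exactly $1$. Since the unit-weight edges already connect $\{0, \dots, n+1\}$ into a connected graph (via $0$), the host restricted to these vertices is then forced to be a single Hamiltonian path with $0$ and $n+1$ as its endpoints. On such a host the pendant edges contribute a fixed constant; the edges $\{0,i\}$ and $\{n+1,i\}$ contribute exactly $n(n+1)$ regardless of the ordering (the two distances $d(0,x)$ and $d(n+1,x)$ always sum to $n+1$); and the $H$-edges contribute precisely the linear-arrangement cost of the induced ordering. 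Hence the optimum equals $\mathrm{MinLA}(H)$ plus a fixed constant, and the decision threshold translates directly. For $\Delta \ge 3$ the extra $\Delta - 2$ pendants per vertex are exactly what absorbs the surplus degree so that this ``backbone'' analysis still applies.

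For the version \emph{without} Steiner nodes this already gives NP-hardness for every fixed $\Delta \ge 2$. For the version \emph{with} Steiner nodes the same reduction works verbatim when $\Delta = 2$, because a Steiner node of degree at most $2$ can always be contracted away without increasing any distance. The main obstacle is the Steiner version for $\Delta \ge 3$: now a Steiner ``hub'' could shortcut the anchor--vertex distances and push the objective below $\mathrm{MinLA}(H) + \text{const}$, so the anchor gadget above is not by itself Steiner-robust. I would handle this either by replacing the anchor gadget with a path-forcing gadget whose only cheap realizations are linear even in the presence of Steiner nodes, or by arguing directly on the constructed instance --- using the tight degree budgets together with the lower bound of Theorem~\ref{thm:lower-bound} --- that no host, Steiner nodes included, can beat the Hamiltonian-path host. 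Certifying that Steiner nodes yield no advantage on the reduction instance is the step where I expect the real work to lie.
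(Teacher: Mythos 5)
Your NP-membership argument matches the paper's (it is exactly the role of Lemma~\ref{lemma:steiner-node-bound}), but the hardness reduction has a genuine gap, already in the version \emph{without} Steiner nodes for $\Delta \ge 3$. The large demand $W$ only forces the pendant \emph{edges} into the host; it does not stop the pendants from being used as routing nodes. Each pendant leaf spends one port on its forced edge and retains $\Delta-1 \ge 2$ free ports, and each original vertex owns at least one pendant, so the pendants (plus the two spare ports per vertex) can be wired into a low-diameter tree or expander. Then every demand pair, including all anchor--vertex and $H$-edge pairs, is served at distance $O(\log n)$, the host cost is far below your threshold $K \approx MW + n(n+1) + K_{\mathrm{MinLA}}$ irrespective of $\mathrm{MinLA}(H)$, and the backward direction of the reduction collapses: the claim that ``the host restricted to $\{0,\dots,n+1\}$ is forced to be a Hamiltonian path'' is simply false. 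This is precisely the problem the paper's degree-blocking gadgets are built to avoid: there, every gadget node is saturated to degree exactly $\Delta$ by forced edges (only the attachment node has degree $\Delta-1$ before attachment), so no spare capacity exists anywhere for shortcuts. Even your $\Delta=2$ case is not clean: the two anchor pendants each keep one free port and can close the path into a cycle, so the induced objective becomes a \emph{circular} arrangement cost and the anchor term is no longer the constant $n(n+1)$; the paper sidesteps this by reducing from undirected circular arrangement (and separately handling disconnected demand graphs) rather than from MinLA.

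The second gap is the one you flag yourself: the Steiner-node version for $\Delta \ge 3$. In your instance it is not a technicality --- the two free ports per original vertex already admit a Steiner hub tree that shortcuts everything, so Steiner-freeness genuinely fails for your construction and cannot be ``argued directly'' from Theorem~\ref{thm:lower-bound}. The paper's reduction (from vertex cover on $3$-regular graphs, not MinLA) is engineered so that this issue disappears: all degrees outside the selector nodes $s_i$ and vertex-gadget roots $r_v$ are saturated by forced edges, every $r$--$t_e$ path must exit through some $s_i$ and enter through some $r_v$, and the budget $K$ is tight enough that inserting any Steiner node on the one non-forced hop only lengthens the path past the threshold; hence the same instance works for both variants. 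A MinLA-style reduction, whose demands are served at widely varying (linear) distances rather than at one exactly-achievable length, does not offer such a tightness argument, so as written your proposal does not establish hardness for either variant with $\Delta \ge 3$.
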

We describe all the details in Appendix~\ref{app:hardness}.
Here, we briefly comment on some main ideas.

A first detail is that we verify that the version with Steiner nodes is in fact in NP; the subtlety here is that we need to show that a polynomial number of Steiner nodes suffices for an optimal solution.

For $\Delta=2$, Steiner nodes are clearly not helpful, and so both variants are the same.
NP-completeness follows easily from a variant of the minimum linear arrangement problem, namely \emph{undirected circular arrangement}.
Here, the goal is to embed the nodes of a graph onto the discrete circle, with minimum average stretch.
If the demand graph is connected, this is identical to the bounded network problem; handling demand graphs with multiple components requires some minor technical work.

For $\Delta > 3$, we have to work harder.
We give a reduction from vertex cover on 3-regular graphs.
There are two main tricks we utilize in this proof.
First is the observation that when we construct instances for either variant of the bounded network problem, we can force any optimal host graph to have any specific edge $\{ u, v \}$ we desire, by giving this pair a very large demand $p(\{ u,v \}) = W$. 
For $W$ large enough compared to other demands, any optimal host graph will necessarily include all of these large-demand edges.

The second trick is that we can essentially allow for each node $v$ to be given its own degree bound $\Delta_v \leq \Delta$. 
This is very useful in the construction of various gadgets used for the reduction. 
To achieve this, we describe ``degree blocking gadgets'': a graph with all nodes having degree $\Delta$, except for one node of degree $\Delta-1$. 
(This requires $\Delta$ to be odd, since no such gadget exists for even $\Delta$ for parity reasons; we have an additional reduction from even $\Delta$ to odd $\Delta$.)
By attaching such a gadget with forced edges (as per the above) to some node $v$, we reduce its degree available for other uses by 1. 

Given an instance of vertex cover on a graph $G=(V,E)$ and a bound $k$ on the size of a vertex cover, an instance is constructed with the following properties.
Aside from forced edges, the demands are all unit, between a root node $r$ and vertices $t_e$, one for each edge $e \in E$.
With the aid of forced edges and reduced degree bounds, the instance is constructed so that every path from $r$ to $t_e$ can be shorter than some minimum $K$; indeed these paths consist of $K-1$ forced edges, and can only be of length $K$ if certain specific edges are included in the host graph.
Further, the only way to have all these paths having length precisely $K$ is through a solution to the vertex cover problem of size at most $k$.
Steiner nodes are not useful for obtaining a solution with expected path length $K$, so this hardness reduction applies to both variants of the model.


\subsection{A lower bound for a universal host graph}\label{sec:universal-lb}

We now return to the question of approximation algorithms for bounded network design \emph{without} Steiner nodes. One possible approach one might consider is choosing a ``universal'' host graph of maximum degree $\Delta$, e.g., a $\Delta$-regular expander, and the solving the problem by embedding the demands into this universal host graph. We show in this section that this approach cannot yield a constant-factor approximation algorithm.

We will restrict our attention to uniform demand functions, and will henceforth simply refer to demand graphs with the implicit understanding that the demand function is uniform.
The precise question we now want to ask is whether there exists a host graph $G$ such that the following holds: given a demand graph $D$ with $n$ vertices, which has an optimal embedding into a host graph $G^*(D)$ (with maximum degree $\Delta$) of expected path length $\ell^*(D)$, we have an embedding into $G$ of expected path length $O(\ell^*(D))$.

\begin{theorem}
Let $\Delta \ge 2$. For any host graph $G=(V,E)$ of size $n$ and maximum degree $\Delta$, there exists some demand graph $D$ with maximum degree $\Delta$ for which the optimal embedding of $D$ into $G$ has expected path length $\Omega(\log \log n)$, where constants depending on $\Delta$ are hidden in the $\Omega$.
\end{theorem}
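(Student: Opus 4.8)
The plan is to argue by counting, exploiting that the host graph $G$ is \emph{fixed} in advance: we show that among all $\Delta$-regular demand graphs on the vertex set $[n]$, only a negligible fraction admit an embedding into $G$ of cost below $L := \Theta(\log\log n)$, so some $\Delta$-regular $D$ is hard for this particular $G$. Such a $D$ embeds into \emph{itself} with uniform demand and expected path length exactly $1$, so the same $D$ simultaneously witnesses the claimed $\Omega(\log\log n)$ gap between embedding into a fixed $G$ and embedding into the best demand-aware host graph. We take $\Delta \ge 3$ a constant; $\Delta = 2$ needs a separate direct argument.

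First I would set up a union bound over embeddings. Call a $\Delta$-regular graph $D$ on $[n]$ \emph{cheap} if some bijection $\phi\colon[n]\to V(G)$ gives $\tfrac{2}{\Delta n}\sum_{\{u,v\}\in E(D)}d_G(\phi(u),\phi(v)) \le L$. Fix $\phi$; by relabeling we may take $\phi$ to be the identity. Choose a parameter $\alpha>1$. If $D$ is cheap via $\phi$, Markov's inequality forces at least a $(1-1/\alpha)$-fraction of its $\Delta n/2$ edges to join pairs at $G$-distance at most $\alpha L$. Since $G$ has maximum degree $\Delta$, the set $C$ of pairs at distance $\le \alpha L$ satisfies $|C|\le n\Delta^{\alpha L+1}$. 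Hence $D$ is specified by a sub-collection of $C$ (at most $2^{|C|}$ options) together with its remaining $\le \Delta n/(2\alpha)$ edges (at most $\binom{\binom{n}{2}}{\le \Delta n/(2\alpha)}$ options); multiplying by the $n!$ choices of $\phi$, the number of cheap graphs is at most $n!\cdot 2^{\,n\Delta^{\alpha L+1}}\cdot\binom{\binom{n}{2}}{\le \Delta n/(2\alpha)}$, whose base-$2$ logarithm is at most $\bigl(1+\tfrac{\Delta}{2\alpha}\bigr)\,n\log n + n\Delta^{\alpha L+1} + O(n)$.

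The main step is then to compare this with the number of $\Delta$-regular graphs on $[n]$, which by standard enumeration has base-$2$ logarithm $\tfrac{\Delta n}{2}\log n - O(n)$. Because $\Delta\ge 3$, we can fix $\alpha$ large enough that $1+\tfrac{\Delta}{2\alpha} < \tfrac{\Delta}{2}$, leaving a surplus of $\Theta(n\log n)$ into which the term $n\Delta^{\alpha L+1}$ must fit; this holds precisely while $\Delta^{\alpha L+1}\le c\log n$ for a constant $c=c(\Delta,\alpha)>0$, i.e.\ while $L\le \tfrac1\alpha\log_\Delta\log n - O(1)$. For such an $L=\Omega(\log\log n)$ the number of cheap graphs is strictly smaller than the total count, so some $\Delta$-regular $D$ is not cheap, meaning every embedding of $D$ into $G$ costs more than $L$.

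I expect the delicate part to be exactly this bookkeeping. The $n!$ choices of $\phi$ already consume $\approx n\log n$ of the budget, the ``free'' edges of $D$ another $\approx\tfrac{\Delta n}{2\alpha}\log n$, so it is essential that $\Delta\ge 3$ (otherwise no surplus over $\tfrac{\Delta n}{2}\log n$ remains, as is seen for $\Delta=2$), and the surviving surplus is only $\Theta(n\log n)$ — hence $n\Delta^{\alpha L+1}$ cannot be allowed to exceed $\Theta(n\log n)$, and this ceiling is what yields the $\log\log n$ bound rather than anything larger. Everything else is routine: the ball-size estimate $|C|\le n\Delta^{\alpha L+1}$, the asymptotics for the number of $\Delta$-regular graphs, absorbing the sum over ``$\le \Delta n/(2\alpha)$'' into the $O(n)$ term, and the minor parity fix when $\Delta n$ is odd (pass to $n-1$ vertices, or use a near-regular graph).
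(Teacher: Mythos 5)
Your proposal is correct and follows essentially the same route as the paper's proof: a counting/compression argument comparing the number of $\Delta$-regular graphs on $n$ vertices against the number that can embed into the fixed host $G$ with average stretch $O(\log\log n)$, using Markov's inequality to force most demand edges into the small set of pairs at host distance $O(\log\log n)$ and a ball-growth bound $\Delta^{O(\log\log n)} \le \mathrm{polylog}(n)$ to make the count too small. The only differences are cosmetic — you use an explicit union bound over the $n!$ embeddings and over the short/long edge split where the paper uses two pigeonhole steps (over permutations, then over subgraphs of the ``short-pair'' graph $\hat G$), and both treatments defer the easy $\Delta=2$ case to a separate direct argument.
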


Note that such a demand graph $D$ trivially has an embedding into a graph of maximum degree $\Delta$ and with expected path length 1; simply use $D$ itself.

\begin{proof}
We consider $\Delta \geq 3$; the claim is obvious for $\Delta=2$, as the only connected host graphs are the $n$-path and the $n$-cycle, and a demand graph consisting of two cycles of length $n/2$ does not have a good embedding into either.

Let $\mathcal{G}_\Delta$ be the collection of $\Delta$-regular graphs on $V$.
We assume from hereon that $\Delta n$ is even, since otherwise $\mathcal{G}_{\Delta}$ is empty.
Standard results allow us to bound $|\mathcal{G}_\Delta|$. 
Asymptotically, the number of $\Delta$-regular graphs (ignoring constants depending on $\Delta$) is (see~\cite{Bender_Canfield_1978})
\[
   \Theta\left(\frac{(\Delta n)!}{(\Delta n /2)!\,2^{\Delta n/2}(\Delta !)^n}\right).
\]
This gives us the lower bound
\[
|\mathcal{G}_\Delta| = \Omega((\Delta n/4)^{\Delta n/2}\cdot \Delta^{-\Delta n}) = \Omega(2^{\Delta n(\log(\Delta n) - 2)/2 - \Delta \log \Delta \cdot n}). 
\]

Suppose that for every graph $D$ on $V$ with maximum degree $\Delta$, there is some embedding (that is, a permutation) $\pi_D$ that embeds $D$ into $G$ with average stretch at most $\ell := \log \log n / (200  \log \Delta)$.
Let $\mathcal{G}_\pi := \{ D \in \mathcal{G}_\Delta: \pi_D = \pi\}$, for any permutation $\pi$ of $V$.
There must be some permutation $\sigma$ for which 
\[
   |\Gsigma| \geq |\mathcal{G}_\Delta| / n! = \Omega(2^{\Delta n\log n / 2 - \Delta (1 + \log \Delta) n - n\log n}).
\]
As long as $\Delta \geq 3$, this is at least $\Omega(2^{\Delta n\log n / 4})$. 

We will restrict our attention to $\Gsigma$ going forward.
Without loss of generality, we can assume $\sigma$ is the identity.

Let $\hat{G} = (V, \hat{E})$ be the graph where $\{ u,v\} \in \hat{E}$ whenever the shortest path distance between $u$ and $v$ in $G$ at most $100\ell$.

\begin{lemma}
$\hat{G}$ has fewer than $2^{100\log \Delta\cdot \ell}\cdot n$ edges.
\end{lemma}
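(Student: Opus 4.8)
The plan is to bound the number of edges of $\hat{G}$ by controlling, for each vertex $v$, the size of its ball of radius $100\ell$ in $G$. Since $G$ has maximum degree $\Delta$, the number of vertices within distance $100\ell$ of any fixed vertex $v$ is at most
\[
1 + \Delta + \Delta(\Delta-1) + \dots + \Delta(\Delta-1)^{100\ell - 1} \le \Delta^{100\ell + 1} \le 2^{100 \log\Delta \cdot \ell + O(\log\Delta)}\,,
\]
so in particular at most $2^{100\log\Delta\cdot\ell}$ up to the constant absorbed in $\Delta$ (or, being slightly more careful with the crude bound $\Delta^{100\ell}\le 2^{100\log\Delta\cdot\ell}$, exactly this quantity). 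Hence each vertex of $\hat G$ has degree at most $2^{100\log\Delta\cdot\ell}$, and summing over the $n$ vertices and dividing by $2$ gives at most $2^{100\log\Delta\cdot\ell}\cdot n/2 < 2^{100\log\Delta\cdot\ell}\cdot n$ edges.

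The only step requiring any care is getting the constant factors in the ball-size estimate to land exactly at the claimed bound $2^{100\log\Delta\cdot\ell}$ rather than, say, $2^{100\log\Delta\cdot\ell+O(1)}$; the cleanest route is to use the generous bound that the ball of radius $100\ell$ has size at most $\Delta^{100\ell}$ for $\Delta\ge 2$ and $\ell\ge 1$ (which holds because $\sum_{i=0}^{k-1}\Delta(\Delta-1)^i \le \Delta^k$ whenever $\Delta\ge 2$), and then note $\Delta^{100\ell}=2^{100\ell\log\Delta}$. One should double-check the edge case where the value of $\ell$ defined earlier is small (e.g. $\ell<1$ for small $n$), but for $n$ large enough that $\ell\ge 1$ the estimate is clean, and the lemma is only needed asymptotically.

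I do not anticipate a genuine obstacle here: the statement is a routine volume-growth bound for bounded-degree graphs, and the factor-of-two slack in "fewer than $\dots\cdot n$" (versus $n/2$) comprises exactly the cushion needed to absorb the low-order terms. The substantive work of the argument lies downstream — using this edge bound together with the lower bound on $|\Gsigma|$ to derive a contradiction via a counting argument over which graphs $D\in\Gsigma$ can embed into $G$ with small stretch — but that is outside the scope of this lemma.
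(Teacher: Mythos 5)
Your argument is correct and is essentially the paper's proof: the paper likewise bounds the ball of radius $100\ell$ around each vertex by $\Delta^{100\ell} = 2^{100\log\Delta\cdot\ell}$ using the degree bound, which caps each vertex's degree in $\hat{G}$ and yields the stated edge count after summing over the $n$ vertices. Your extra care with the geometric series and the factor-of-two slack only tightens a step the paper states more tersely.
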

\begin{proof}
For each $v \in V$, just from the degree bound for $G$ we deduce that there are less than $\Delta^{100\ell}$ nodes within distance $100\ell$ of $v$.
\end{proof}
From our choice of $\ell$, this means that $\hat{G}$ has at most $n\sqrt{\log n}$ edges.

We consider every possible subgraph $G'$ of $\hat{G}$.
For each such $G'$, let 
\[ \Gamma(G') := \{ D \in \Gsigma: E(D) \cap \hat{E} = E(G')\}. \]
Since there are at most $2^{n\sqrt{\log n}}$ subgraphs of $\hat{G}$, it follows that there exists some subgraph $G^*$ of $\hat{G}$ for which 
\begin{equation}\label{eq:GammaLB}
    |\Gamma(G^*)| \geq |\Gsigma|/2^{n\sqrt{\log n}} = \Omega(2^{\Delta n\log n / 4 - n\sqrt{\log n}}).
\end{equation}
For each $D \in \Gamma(G^*)$, given that the average stretch of $D$ is at most $\ell$, we have by Markov's inequality that at most $1/100$ fraction of the edges of $D$ have stretch more than $100\ell$. 
That is, $|E(D) \setminus \hat{E}| \leq |E(D)|/100 \leq \Delta n / 200$. 
This means we can describe $D$ in an efficient way: its edge set consists of $E(G^*)$, along with at most $\Delta n / 200$ other edges.

Considering that we can describe every $D \in \Gamma(G^*)$ in such a way, we deduce that $|\Gamma(G^*)|$ is at most the number of ways of choosing a subset of at most $\Delta n/200$ edges from the collection of $\binom{n}{2}$ pairs.
Hence
\[
  |\Gamma(G^*)| \leq \textstyle\binom{n}{2}^{\Delta n/200} \leq n^{\Delta n / 100} = 2^{\Delta n\log n/100}.
\]
This contradicts \eqref{eq:GammaLB}.
\end{proof}

\section{Approximation with Steiner Nodes}\label{sec:steiner-apx}

In this section, we give a constant-factor approximation algorithm for the bounded network design with Steiner nodes. Let $\Delta$ be the degree bound, $V = \{ 1, 2, \dotsc, n  \}$ the set of nodes and $p \colon \binom{V}{2} \to [0,1]$ the probability distribution given as input. Let $m$ denote the number of demand edges.


\paragraph{Approximation algorithm.} We now present the approximation algorithm itself. The algorithm constructs the host graph in two simple steps: first, for each node $v \in V$ separately, we construct a Huffman tree for the distribution $p_v$, i.e., the conditional distribution of the other endpoint given that $v$ is included in the demand edge. Second, we connect these trees together to obtain the final host graph.

In more detail, the algorithm constructs the host graph $G$ as follows:
\begin{enumerate}
	\item For each node $v \in V$, let $T_v$ be the $(\Delta-1)$-ary Huffman tree for probability distribution $p_v$, ignoring any probabilities of $0$. Denote the leaf node corresponding to probability $p_v(u)$ as $t_{v,u}$, and let $d_{T_v}(v, t_{v,u})$ be the distance from $v$ to $t_{v,u}$ in $T_v$.
	
		  In the special case that $p_v(u) = 1$ for some $u \in V$, we take $T_v$ to be a tree with a root and a single leaf instead of the trivial tree.
		  
	\item For each pair $\{ u, v \} \subseteq V$ with $p(\{u,v\}) \ne 0$, we add an edge between the parent of $t_{v,u}$ in $T_v$ and the parent of $t_{u,v}$ in $T_u$, and remove the nodes $t_{v,u}$ and $t_{u,v}$.
\end{enumerate}
See Figure~\ref{fig:steiner-node-apx} for illustration.
\begin{figure}
\begin{center}
\includegraphics[width=0.8\textwidth]{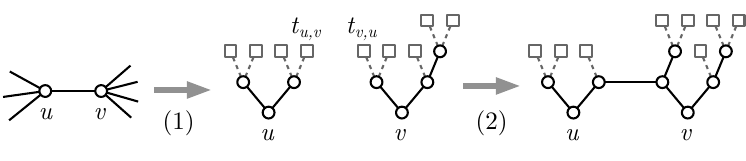}
\end{center}
\caption{ Approximation algorithm for bounded network design with Steiner nodes. In Step~(1), a Huffman tree is constructed for all nodes; the leaf nodes marked by squares are not included in the final host graph. In Step~(2), the corresponding leaf nodes are deleted and replaced by an edge.}\label{fig:steiner-node-apx}
\end{figure}

\paragraph{Time complexity.} To analyze the time complexity of the approximation algorithm, we first observe that constructing the Huffman tree in Step~(1) for node $v$ takes $O(d_v \log d_v)$ time, where $d_v = \deg_D(v)$ is the degree of $v$ in the demand graph. We have
\[ \sum_{v \in V} d_v \log_2 d_v \le \log_2 m \sum_{v \in V} d_v = 2 m \log_2 m\,,\]
and thus constructing Huffman trees for all nodes takes $O(m \log n)$ time in total. Step~(2) can clearly also be implemented in $O(m\log n)$ time, so the total time complexity is $O(m \log n)$.

\paragraph{Cost of solution.} Consider a graph $G$ as constructed above. By construction, we have
\[ d_G(u,v) = d_{T_v}(v, t_{v,u}) + d_{T_u}(u, t_{u,v}) - 1 \]
for all $u,v \in V$. Thus, the expected path length between nodes in $V$ in $G$ is
\begin{align*}
  & \sum_{\{u,v\} \in \binom{V}{2}} p(\{u,v\}) \bigl( d_{T_v}(v, t_{v,u}) + d_{T_u}(u, t_{u,v}) - 1 \bigr) 	\\
=\ & \sum_{v \in V} \sum_{u \in V\setminus \{ v \}} p(\{u,v\}) d_{T_v}(v, t_{v,u})  - \sum_{\{u,v\} \in \binom{V}{2}} p(\{u,v\}) && \text{(Rearrange)}  \\
=\ & \sum_{v \in V} p(v) \sum_{u \in V} p_v(u) d_{T_v}(v, t_{v,u}) - 1								&& \text{($p(\{ u,v\}) = p(v)p_v(u)$)}\\
\le\ & \sum_{v \in V} p(v) \Bigl( H_{\Delta-1}(p_v) + 1 \Bigr) - 1									&& \text{(Lemma~\ref{lemma:huffman})}\\
=\ & \sum_{v \in V} p(v) H_{\Delta-1}(p_v) + \sum_{v\in V }p(v) - 1\\
=\ & \sum_{v \in V} p(v) H_{\Delta-1}(p_v) + 1\,.													&& \text{($\sum_{ v \in V } p(v) = 2$)} \\
:=\ & C\,.
\end{align*}

By Theorem~\ref{thm:lower-bound}, the expected path length $C^*$ of an optimal solution is $\frac{1}{2}\sum_{v \in V} p(v) H_{\Delta+1}(p_v) - 1$ or greater. Comparing this to the expected path length upper bound $C$ on $G$, we have
\begin{align*}
C & = \sum_{v \in V} p(v) H_{\Delta-1}(p_v) + 1	\\
& = \frac{2\log_2(\Delta+1)}{\log_2(\Delta-1)}\Bigl(\frac{1}{2}\sum_{v \in V} p(v) H_{\Delta+1}(p_v) - 1\Bigr) + 1 + \frac{2\log_2(\Delta+1)}{\log_2(\Delta-1)}\\
& \le \frac{2\log_2(\Delta+1)}{\log_2(\Delta-1)} C^* + 1 + \frac{2\log_2(\Delta+1)}{\log_2(\Delta-1)}\,.
\end{align*}
Thus, the approximation factor of the algorithm is $\frac{2\log_2(\Delta+1)}{\log_2(\Delta-1)} + o(1)$. Note that $\frac{2\log_2(\Delta+1)}{\log_2(\Delta-1)}$ is at most $4$ for all $\Delta \ge 3$, and at most $3$ for all $\Delta \ge 4$.

\paragraph{Number of nodes.} Let $n = |V|$ and let $m$ be the number of non-zero entries in $p$. 
We now want to analyze the total number of nodes in $V \cup U$ in the graph $G$ given by the above algorithm.

\begin{fact}[e.g.~\cite{2019information}]\label{fact:huffman-structure}
Given a $d$-ary Huffman tree $T$, the leaves can be rearranged so that all internal nodes of $T$ have exactly $d$ children, with the exception of one internal node located at the maximum distance from the root. This node has at least $2$ children.
\end{fact}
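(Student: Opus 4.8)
The plan is to produce one $d$-ary Huffman tree of the asserted shape; since every $d$-ary Huffman tree for a given distribution is cost-optimal, such a tree may be used wherever a Huffman tree for the distribution is needed, and the ``rearrangement'' in the statement amounts to the cost-free relabelling of leaves together with the standard freedom in breaking ties during the construction. First I would normalise the instance: if the distribution has $n$ symbols, choose $r \in \{0,1,\dots,d-2\}$ with $n + r \equiv 1 \pmod{d-1}$ and pad with $r$ dummy symbols of probability $0$. Running the $d$-ary Huffman algorithm on these $n+r$ symbols then combines exactly $d$ subtrees at \emph{every} merge step: the number of live subtrees starts at $n+r \equiv 1 \pmod{d-1}$ and decreases by $d-1$ per merge until it reaches $1$, so no deficient merge is ever forced. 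Hence the tree $T'$ it outputs is a \emph{full} $d$-ary tree in which every internal node has exactly $d$ children.

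Next I would track the dummies. Their probability $0$ is the global minimum, so they all lie among the $d$ smallest subtrees at the first merge step; breaking ties so as to group them, the first merge combines the $r$ dummy leaves with $d-r \ge 2$ genuine leaves of smallest probability into a single internal node $w$. By the exchange argument behind Huffman's optimality — in an optimal tree the smallest-probability items may be taken to be siblings at a deepest leaf level, and Huffman's first merge realises this (if $w$ were not deepest, swap the star hanging from $w$ with a deepest star; since $w$'s children carry the globally smallest probabilities this does not increase the cost) — the node $w$ sits at maximum distance from the root among the internal nodes of $T'$. Now delete the $r$ dummy leaves: $w$ loses exactly $r$ children and keeps $d-r$, where $2 \le d-r \le d$; every other internal node is untouched and keeps exactly $d$ children; and the remaining leaves are precisely the $n$ genuine symbols. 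The resulting tree $T$ is therefore a $d$-ary Huffman tree for the original distribution in which every internal node has exactly $d$ children except $w$, which has at least two children and lies at maximum distance from the root. (When $r=0$ there is nothing to delete and every internal node already has $d \ge 2$ children, so the claim holds with $w$ any internal node of maximum depth.)

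The step I expect to be the main obstacle is making the ``maximum distance from the root'' clause fully rigorous, i.e.\ justifying the $d$-ary generalisation of Huffman's sibling lemma in the presence of the single deficient merge; this is exactly where the exchange step above does the real work, and everything else is the bookkeeping check that padding leaves precisely one internal node deficient. As an alternative to invoking the construction, one can argue directly at the level of trees: in any optimal $d$-ary prefix tree every internal node has at least two children, and (taking an optimal tree with the fewest nodes) if an internal node with fewer than $d$ children were no deeper than another internal node $v$, one could move a child of $v$ under it without increasing the cost and eventually contract $v$, contradicting minimality; and if the unique deficient node sat strictly closer to the root than some positive-probability leaf, pulling that leaf up to become its child would strictly decrease the cost, a contradiction. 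Hence the deficient node is unique and at maximal depth, which is the content of the Fact.
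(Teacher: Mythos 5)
The paper does not actually prove this statement: it is presented as a Fact with a pointer to a standard information-theory reference, and is used only to bound the number of internal nodes of the Huffman trees $T_v$, so there is no in-paper proof to compare against. Your argument is essentially the textbook justification that the citation stands for — pad the alphabet with $r \le d-2$ zero-probability dummies so the symbol count is $\equiv 1 \pmod{d-1}$, note that every merge is then a full $d$-way merge, arrange (by tie-breaking plus a cost-neutral subtree exchange) that the dummies sit in the first, lightest merge, which can be taken to lie at maximum depth, and then delete them, leaving a single deficient node with $d-r\ge 2$ children — and it is correct; the exchange step and the bound $d-r\ge 2$ are exactly the points that need care, and you address both. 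Two small imprecisions, both in your alternative tree-level sketch rather than in the main argument: pulling a positive-probability leaf under the deficient node is only \emph{strictly} cost-decreasing if the deficient node is more than one level closer to the root than that leaf (if the leaf is exactly one level deeper the move is cost-neutral), though the conclusion that the deficient node sits at maximum internal depth still follows by taking a deepest leaf; and ``eventually contract $v$'' needs a short iteration argument, since the shallower deficient node may fill up before the deeper one is emptied. These are routine repairs and do not affect your primary padding construction.
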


\begin{fact}\label{fact:tree-internal-nodes}
A full $d$-ary tree with $\ell$ leaves has $\frac{\ell - 1}{d - 1}$ internal nodes.
\end{fact}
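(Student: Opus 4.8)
The plan is to prove the identity by a double-counting argument on the edges of the tree. Let $i$ denote the number of internal nodes of the full $d$-ary tree $T$, so that $T$ has $i + \ell$ nodes in total, and recall that in a full $d$-ary tree every internal node has exactly $d$ children. First I would count the edges of $T$ in two ways. On one hand, $T$ is a tree on $i + \ell$ nodes, so it has exactly $i + \ell - 1$ edges. On the other hand, every edge of $T$ is the edge between some internal node and one of its children, and each of the $i$ internal nodes contributes exactly $d$ such edges; hence $T$ has exactly $d\,i$ edges. Equating the two counts gives $d\,i = i + \ell - 1$, i.e.\ $i(d-1) = \ell - 1$, and therefore $i = \frac{\ell - 1}{d - 1}$, as claimed.

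An alternative would be induction on $i$: the base case $i = 1$ is a root with $d$ leaf children, so $\ell = d$ and $\frac{\ell-1}{d-1} = 1 = i$; and the inductive step replaces one leaf by an internal node with $d$ fresh leaves, which increases $i$ by $1$ and $\ell$ by $d-1$, preserving the identity. There is essentially no obstacle here — the statement is a standard fact — so the only point requiring a moment's care is that we are in the regime $d \ge 2$ where $\frac{\ell-1}{d-1}$ is meaningful and $T$ genuinely has internal nodes; this is exactly the setting in which Fact~\ref{fact:tree-internal-nodes} is invoked (together with Fact~\ref{fact:huffman-structure}) when bounding the number of Steiner nodes. I would go with the double-counting proof, as it is the shortest and makes the structural reason for the formula most transparent.
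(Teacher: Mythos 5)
Your double-counting argument is correct: equating the edge count $i+\ell-1$ of a tree on $i+\ell$ nodes with the $d\,i$ parent--child edges gives $i=\frac{\ell-1}{d-1}$, which is exactly the standard justification for this fact (the paper states it without proof, as folklore). Nothing is missing, and your remark that $d\ge 2$ is the relevant regime matches how the fact is used alongside Fact~\ref{fact:huffman-structure} in the node-count bound.
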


Now consider the Huffman tree $T_v$ constructed in Step~(1) of the algorithm. Let $d = \Delta - 1$ be the arity of the Huffman trees. Denoting the outdegree of $v$ in $p$ by $\deg(v)$, and assuming $\deg(v) \ge 2$, we have that $T_v$ has $\deg(v)$ leaves. By Fact~\ref{fact:huffman-structure}, we can turn $T_v$ into a full $d$-ary tree by adding at most $d-2$ leaves, which has $\ell \le \deg(v) + d - 2$ leaves. By Fact~\ref{fact:tree-internal-nodes} it has at most
\[ \frac{\ell - 1}{d - 1} \le \frac{\deg(v) - 1 + d - 2}{d-1} = \frac{\deg(v) - 2}{d-1} + \frac{d-1}{d-1} = \frac{\deg(v) - 2}{d-1} + 1 \]
internal nodes; this is also an upper bound for the number of internal nodes in $T_v$.

In the special case $\deg(v) = 1$, the above bound does not hold. However, we can use a slightly looser bound of $\frac{\deg(v) - 1}{d-1} + 1$ to cover all cases.

Since the leaf nodes of the Huffman trees $T_v$ are deleted in Step~2 of the algorithm, the total number of nodes in $G$ is the total number of internal nodes in these trees -- note that the roots of the trees $T_v$ correspond to the original nodes in $V$.
Thus, the number of nodes in $G$ is at most
\begin{equation}\label{eq:node-insertion-node-count}
\sum_{v \in V}\Bigl( \frac{\deg(v)-1}{d-1} + 1 \Bigr) = \Bigl( 1 - \frac{1}{d-1} \Bigr)n + \sum_{v \in V} \frac{\deg(v)}{d-1}  \le \Bigl( 1 - \frac{1}{d-1} \Bigr)n + \frac{2}{d-1}m\,.
\end{equation}
If there are no nodes with $\deg(v) = 1$, then the above bound improves to $\bigl( 1 - \frac{2}{d-1} \bigr)n + \frac{2}{d-1}m$.


\section{Applications}\label{sec:demand-balancing-model}\label{sec:demand-balancing-alg}

This section discusses applications of our approach.
First, we will use it to derive improved approximation bounds for the original
demand-aware network design problem which results in a constant factor approximation for the case where the maximum degree is close the average degree of the demand graph.
Second, we show how it can be used to obtain efficient heuristics.

\subsection{Improved approximation algorithm for demand balancing}\label{sec:demand-balancing-algstart}
In this section, we use the ideas from our approximation algorithm for bounded network design with Steiner nodes to design a \emph{demand balancing} algorithm in the spirit of the results of Avin et al.~\cite{avin2020demand}. That is, we consider a demand graph with average degree $\avgdeg{}$, but with possibly some nodes with very high degree, and we want to design a host graph \emph{without Steiner nodes} that has maximum degree close to $\avgdeg{}$.

Here, we give an improved version of Theorem~4 of \cite{avin2020demand}, showing how to construct a host graph with maximum degree $4\avgdeg{}+1$ and better expected path length guarantees. Specifically, we improve the maximum degree guarantee from $12\avgdeg{}$ to $4\avgdeg{} + 1$ and the expected path length guarantee by a factor of $\log_2(2\avgdeg{})$, which results in a constant factor approximation.

\subsubsection{Algorithm} \janne{Only place in the paper with subsubsections, I would prefer not having this many layers of nesting}

For technical convenience, we assume $\avgdeg{}$ is an integer and $n$ is divisible by $2$. We say that a node $v$ is a \emph{high-degree node} if its degree $\deg(v)$ in the demand graph satisfies $\deg(v) > 2\avgdeg{}$, and otherwise we say it is a \emph{low-degree node}. By Markov's inequality, the number of high-degree nodes is at most $n/2$.

For each node $v$, we say that the up to $2\avgdeg{}$ demand graph edges $\{ u, v \}$ for which $p_v(u)$ are highest are the \emph{high-demand edges for $v$}, and the rest of the edges are \emph{low-demand edges for $v$}. Note that only heavy nodes have low-demand edges.

We now construct the host graph $G = (V,E)$ in two steps as follows.

\paragraph{Step 1: Low-demand trees.} For a heavy node $v$, let $L_v$ and $H_v$ be the set of neighbours $u$ of $v$ in the demand graph such that $\{u, v\}$ is low-demand and high-demand, respectively.
Define a distribution~$q_v$ on $L_v$ by selecting an arbitrary node $s \in L_v$ and setting
\[ q_v(u) = \begin{cases}
p_v(u) + \sum_{w \in H_v} p_v(w) &  \text{if $u = s$, and}\\
p_v(u) & \text{otherwise.}
\end{cases} \]
As in Section~\ref{sec:steiner-apx}, we construct a $2\avgdeg{}$-ary Huffman tree for terminal set $L_v$ wrt. the probabilities~$q_v(u)$. As before, denote the leaf node corresponding to probability $u$ as $t_{v,u}$, and let~$d_{T_v}(v, t_{v,u})$ be the distance from $v$ to $t_{v,u}$ in $T_v$.

We repeat the construction for all heavy nodes, and map the internal nodes of the trees injectively to nodes of $V$ so that root of the tree for $v$ is $v$. We then add all edges of the tree to the routing graph $G$. This is always possible, since by applying \eqref{eq:node-insertion-node-count} with $N = n/2$, $M = \avgdeg{} n /2$, and $d = 2\avgdeg{}$, we have that there are at most
\[
\Bigl( 1 - \frac{1}{2\avgdeg{} - 1} \Bigr)\frac{n}{2} + \frac{2}{2\avgdeg{} - 1}\frac{\avgdeg{} n}{2}= \Bigl( 1 + \frac{ 2 \avgdeg{}  - 1 }{2 \avgdeg{} - 1}\Bigr) \frac{n}{2} \le n
\]
internal nodes, including the roots, in the constructed trees. 

\paragraph{Step 2: Connecting edges.} Now consider edge $\{ u, v \}$ in the demand graph. There are three cases to consider:
\begin{enumerate}
	\item Edge $\{ u, v \}$ is high-demand for both $u$ and $v$. In this case, we add edge $\{ u, v \}$ to $G$.
	\item Edge $\{ u, v \}$ is low-demand for both $u$ and $v$. In this case, we add an edge from the parent node of $t_{v,u}$ to the parent node of $t_{u,v}$ to $G$.
	\item Otherwise, without loss of generality, edge $\{ u, v \}$ is high-demand for $u$ and low-demand for $v$. In this case, we add an edge from $u$ to the parent of $t_{v,u}$ to $G$.
\end{enumerate} 

\subsubsection{Analysis}

\paragraph{Time complexity.} By the same argument as in Section~\ref{sec:steiner-apx}, we can see that computing the high-demand edges and building the Huffman trees for low-demand edges can be done in time $O(m \log n)$ in total. All other steps of the algorithm can be implemented with at most $O(m \log n)$ time.

\paragraph{Maximum degree.} Low-demand trees constructed in Step 1 have arity $2\avgdeg{}$, and thus contribute at most $2\avgdeg{} + 1$ to the degree of each node. This includes edges added for low-demand edges. Moreover, each node has at most $2\avgdeg{}$ edges added for handling high-demand edges, for a total of $4 \avgdeg{} + 1$.

\paragraph{Expected path length.} First, let us recall the grouping rule for entropy.

\begin{lemma}\label{lemma:entropy-grouping}
Let $p\colon \{ 1, 2, \dotsc, m \}$ be a distribution, and let $q \colon \{ 1,2,\dotsc, m-1 \}$ be a distribution defined by $q(i) = p(i)$ for $i = 1,2,\dotsc, m-1$, and $q(m-1) = p(m-1) + p(m)$. Then 
\[ H_d(p) = H_d(q) + (p(m-1) + p(m)) H_d\Bigl( \frac{p(m-1)}{p(m-1) + p(m)}, \frac{p(m)}{p(m-1) + p(m)} \Bigr)\,. \]
\end{lemma}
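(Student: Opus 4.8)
This is the standard entropy grouping (or chain-rule) identity, and I would prove it by a direct expansion of the definition of $H_d$. Write $s = p(m-1) + p(m)$ for the combined mass of the two merged symbols. The plan is to split the sum defining $H_d(p)$ into the untouched part $\sum_{i=1}^{m-2} p(i)\log_d p(i)$, which is common to both $H_d(p)$ and $H_d(q)$, plus the two terms $-p(m-1)\log_d p(m-1) - p(m)\log_d p(m)$, and then manipulate only the latter.

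The key algebraic step is to write $p(m-1) = s\cdot\frac{p(m-1)}{s}$ and $p(m) = s\cdot\frac{p(m)}{s}$ and use $\log_d(ab) = \log_d a + \log_d b$. This gives
\[
-p(m-1)\log_d p(m-1) - p(m)\log_d p(m)
= -s\log_d s + s\,H_d\!\Bigl(\tfrac{p(m-1)}{s},\tfrac{p(m)}{s}\Bigr),
\]
where the first term collects the $\log_d s$ contributions (using $p(m-1)+p(m)=s$) and the second term is exactly $s$ times the binary entropy of the conditional split. Adding back $-\sum_{i=1}^{m-2}p(i)\log_d p(i)$ and recognizing that $-\sum_{i=1}^{m-2}p(i)\log_d p(i) - s\log_d s$ is precisely $H_d(q)$ (since $q$ agrees with $p$ on $1,\dots,m-2$ and has mass $s$ on symbol $m-1$) yields the claimed identity.

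There is no real obstacle here; the only thing to be careful about is the degenerate case $s = 0$ (equivalently $p(m-1)=p(m)=0$), where one adopts the usual convention $0\log_d 0 = 0$ and both sides reduce to $H_d(q)$. I would state this convention once and otherwise present the three-line computation. If desired, the same argument extends verbatim to merging any fixed number of symbols rather than two, but only the two-symbol version is needed for the Huffman-tree analysis that follows, so I would keep the statement as is.
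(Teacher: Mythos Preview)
Your argument is correct: splitting off the last two terms, factoring $p(m-1)=s\cdot \tfrac{p(m-1)}{s}$ and $p(m)=s\cdot \tfrac{p(m)}{s}$, and regrouping is exactly the standard derivation of the grouping identity, and your handling of the $s=0$ boundary via $0\log_d 0 = 0$ is the right convention. There is nothing to compare against, however, because the paper does not actually prove this lemma; it is introduced with ``let us recall the grouping rule for entropy'' and stated without proof as a known fact, used only to deduce the monotonicity $H_d(q_v)\le H_d(p_v)$. Your write-up would simply fill in what the paper leaves implicit.
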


It follows immediately from Lemma~\ref{lemma:entropy-grouping} that for a heavy node $v$ we have  $H_d(q_v) \le H_d(p_v)$ for any $d \ge 2$.

Now consider a non-zero demand $\{u,v\}$. There are four cases to consider for the distance between $u$ and $v$ in $G$:
\begin{enumerate}
	\item $u \in H_v$ and $v \in H_u$: distance is $d_G(u,v) = 1$.
	\item $u \in H_v$ and $v \in L_u$: distance is $d_G(u,v) = d_{T_u}(u,t_{u,v})$.
	\item $u \in L_v$ and $v \in H_u$: distance is $d_G(u,v) = d_{T_v}(v,t_{v,u})$.
	\item $u \in L_v$ and $v \in L_u$: distance is $d_G(u,v) = d_{T_v}(v,t_{v,u}) + d_{T_u}(u,t_{u,v}) - 1$.
\end{enumerate}

Let $D$ be the set of demand edges, i.e., edges with non-zero demand. For a predicate $P$, denote by $[P]$ a function that is $[P] = 1$ if predicate $P$ is true, and $[P] = 0$ otherwise. The expected path length in $G$ is now
\begin{multline*}
\sum_{\{u,v\} \in D} p(\{u,v\}) \Bigl(  [u \in H_v][v \in H_u] + [u \in H_v][v \in L_u] d_{T_u}(u,t_{u,v}) + [u \in L_v][v \in H_u] d_{T_v}(v,t_{v,u}) \\
								   + [u \in L_v][v \in L_u] (d_{T_v}(v,t_{v,u}) + d_{T_u}(u,t_{u,v}) - 1)\Bigr)\,.
\end{multline*}
By re-grouping terms, this is equal to

\begin{align*}
\sum_{\{u,v\} \in D} p(\{u,v\}) \Bigl( & d_{T_v}(v,t_{v,u}) \bigl( [u \in H_v][v \in L_u] + [u \in L_v][v \in H_u] \bigr) \\
										& + d_{T_u}(u,t_{u,v}) \bigl( [u \in H_v][v \in L_u]  + [u \in L_v][v \in L_u] \bigr) \\
										& + [u \in H_v][v \in H_u] [u \in L_v][v \in L_u] \Bigr)\,.
\end{align*}		
%
By omitting the last negative term, and rearranging further, this is at most
\begin{align*}
& \sum_{u \in V} \sum_{v \in L_u \cup H_u} p(\{u,v\}) \Bigl(d_{T_u}(u,t_{u,v}) \bigl( [u \in H_v][v \in L_u]  + [u \in L_v][v \in L_u] \bigr)
+ [u \in H_v][v \in H_u] \Bigr)\\
\le  & \sum_{u \in V} \Bigl( \sum_{v \in L_u} p(\{u,v\}) d_{T_u}(u,t_{u,v}) + \sum_{v \in H_u} p(\{u,v\}) \Bigr)\\
\le &  \sum_{u \in V} p(u) \Bigl( \sum_{v \in L_u} p_u(v) d_{T_u}(u,t_{u,v}) + \sum_{v \in H_u} p_u(v) \Bigr)\,.
\end{align*}

For fixed $u \in V$, we have
\begin{align*}
\sum_{v \in L_u} p_u(v) d_{T_u}(u,t_{u,v}) + \sum_{v \in H_u} p_u(v) & \le \sum_{v \in L_u} p_u(v) d_{T_u}(u,t_{u,v}) + \sum_{v \in H_u} p_u(v) d_{T_u}(u,t_{u,s}) \\
																     & = \sum_{v \in L_u \setminus \{ s \}} p_u(v) d_{T_u}(u,t_{u,v}) + \bigl(p_u(s) + \sum_{v \in H_u} p_u(v) \bigr) d_{T_u}(u,t_{u,s})\\
																	 & = \sum_{v \in L_u} q_u(v) d_{T_u}(u,t_{u,v})\\
																	 & \le H_{2\avgdeg{}}(q_u) + 1 \le H_{2\avgdeg{}}(p_u) + 1\,.
\end{align*}
Thus, the expected path length in $G$ is at most
\[ \sum_{u \in V} p(u) H_{2\avgdeg{}}(p_u) + 1\,.\]
In comparison, by Theorem~\ref{thm:lower-bound} the expected path length for degree-$(4\avgdeg{}+1)$ host graphs is at least $\frac{1}{2}\sum_{u \in V} p(u) H_{4\avgdeg{}+2}(p_u)$, meaning that $G$ is a constant-factor approximation for the optimal degree-$(4\avgdeg+1)$ host graph.

\subsection{Optimized threshold for demand balancing}\label{sec:opt-thresh-alg}
From the demand balancing algorithm in Section~\ref{sec:demand-balancing-alg} we notice that there is one crucial value that appears multiple times. This value is $2\avgdeg{}$ which defines the threshold for high- and low-degree vertices, high and low-demand edges, and the arity of the Huffman trees.

In the following we instead define a new threshold value $t \in \mathbb{N}, t \geq 2$ and adapt the demand balancing algorithm to it. Crucially, for low thresholds $t$ the algorithm may fail.

A node is called \emph{high-degree} if $\deg(v) > t$, otherwise it is \emph{low-degree}. The first up to $t$ highest weight edges incident to a vertex $v$ are called \emph{high-demand edges for $v$}, and the remaining are are \emph{low-demand edges for $v$}.

We now execute Steps 1 and 2 of the demand balancing algorithm using these adapted definitions, and using $t$-ary Huffman trees in Step 1. Observe that if $t$ is chosen too small then the number of internal nodes in the Huffman trees will be larger than the number of low degree vertices and the algorithm will fail to construct a mapping.

Depending on $t$ the computed host graph will have a different maximum degree and expected path length. Specifically, the maximum degree is at most $2t+1$ and the expected path length is at most $\sum_{u \in V} p(u) H_{t}(p_u) + 1$. Assuming $t < 2\avgdeg{}$ then compared to the demand balancing algorithm this gives a lower maximum degree, but at the cost of a worse expected path length guarantee. However, the expected path length upper bound is only worse by a factor of $\log_t(2\avgdeg{})$.

From Section~\ref{sec:demand-balancing-alg} we know that for $t = 2\avgdeg{}$ the adapted algorithm is simply the demand balancing algorithm and can not fail. We will call the \emph{optimized threshold balancing} algorithm the output of the modified demand balancing algorithm with the smallest possible threshold value $t$. This value can be efficiently computed using binary search.

\subsection{Fixed-degree heuristic}\label{sec:fixed-degree-heuristic}

The main challenge of our methods, as well as the prior work, is that we do not have approximation algorithms for bounded network design that strictly respect a given arbitrary degree bound, without using additional resources such as Steiner nodes. However, in practice we expect that parameters such as the degree bound and number of Steiner nodes are limited by the physical properties of the network infrastructure, so solutions respecting these limitations would be requiring. 

In this section, we give a non-trivial heuristic algorithm for constructing a degree-$\Delta$ host graph for a given demand graph. We discuss its experimental performance in terms of expected path length in Section~\ref{sec:experiments}.

\paragraph{Algorithm overview.}

We first give a high-level overview of the main ideas of the algorithm, and defer some specific implementation detail choices to later in this section.

Let $\Delta \ge 6$ be the maximum degree desired, and choose parameters $d_1, d_2 \geq 3$ such that $d_1 + d_2 = \Delta$. Note that this choice can be made in multiple ways; we discuss the best practical solution later on.

The algorithm first splits the edge set of the demand graph into \emph{heavy edges} and \emph{light edges}. For the heavy edges, we construct a near-optimal solution of degree $d_1$, while for the light edges we use a trivial solution of degree $d_2$. In more detail:
\begin{enumerate}
	\item The heavy edges are selected from the highest-demand edges so that we can apply the approximation algorithm for bounded network design with Steiner nodes with degree bound $d_1$ to the graph induced by these edges using at most $n$ total nodes.
	\item The remaining edges are light edges. To cover the light edges, we use an arbitrary connected graph of maximum degree-$d_2$ to ensure that all pairs of nodes are connected by the host graph.
\end{enumerate}
We then overlay the solutions to both instances on top of the original node set.
In particular, the resulting host graph will have degree $d_1 + d_2 = \Delta$ by construction.

\paragraph{Heavy edges.} Consider a subset of demand graph edges $\hat{E}$, and the induced subgraph $G[\hat{E}] = ( \hat{V}, \hat{E} )$, where $\hat{V} = V \cap \bigcup_{e\in \hat{E}} e$. We can define a new bounded network design instance on $\hat{V}$ by setting
\[\hat{p}(\{u,v\}) = \frac{p(\{u,v\})}{\sum_{u,v\in \hat{V}} p(\{u,v\})}\,.\]
Define $\hat{p}(v)$ and $\hat{p}_v(u)$ analogously to $p(v)$ and $p_v(u)$, respectively. Note that we have
\[
\hat{p}(\{u,v\}) \ge p(\{u,v\})\,, \hspace{20mm}  \hat{p}_v(u)  \ge p_v(u)
\]
for all $u,v \in \hat{V}$.

To select the set of heavy edges $\hat{E}$, we sort all the edges by their demand in descending order, and select a prefix of this list so that when the approximation algorithm for bounded network design with Steiner nodes is applied to the instance $\hat{G} = ( \hat{V}, \hat{E} )$ with degree bound $d_1$, the resulting host graph has at most $n$ nodes in total. We can then overlay this solution on $V$ by identifying the Steiner nodes with nodes in $V \setminus \hat{V}$ in an arbitrary way.

Concrete bounds for the length of the selected prefix can be derived from the analysis of the approximation algorithm, but we found that in practice the best solution is to use binary search and to run the approximation algorithm to exactly compute the size of the host graph for any given prefix of the edge list. This adds an $O(\log n)$ factor to the running time, but this did not cause relevant slow-down in the experiments.

\paragraph{Light edges.} To cover the light edges, we can use any universal solution that guarantees an optimal girth of $\Theta(\log_{d_2} n)$. The simplest option is to use an arbitrary $(d_2-1)$-ary tree of height at most $\log_{d_2-1} n + 1$ spanning $V$, and add it to the output graph. However, we found that in practice a random $d_2$-regular graph is a better solution -- note that random regular graphs are known to be good expanders.

One may notice that the graph resulting from the \steinernodealg{} algorithm on the heavy edges may also have many vertices with degree less than $d_1$. In this case it would be more beneficial to randomly insert edges into this graph as long as the maximum degree is less than $\Delta$, instead of overlaying a graph with maximum degree $d_2$, as this might allow for more edges to be inserted.\janne{Is this used in experiments?}

\subsection{Miscellaneous fixed degree heuristics}\label{sec:misc-heuristics}
We are not aware of any heuristics in the literature that efficiently solve the Bounded Network Design problem for arbitrary $\Delta$.
Here we propose multiple simple heuristics to which we will ultimately compare the fixed degree heuristic in Section~\ref{sec:fixed-degree-heuristic} to in the experiments in Section~\ref{sec:experiments}.

\paragraph{Random tree.}
A simple approach is to simply generate a random $(\Delta-1)$-ary tree, by for example randomly permuting the vertices of the demand graph and building the tree layer by layer.
This is very quick to compute, but clearly it is oblivious to the actual demand graph.
What is more the leaves of the tree are poorly utilized since they have degree 1 and could therefore be incident to many more edges.
The leaves of the tree make up the majority of vertices if the tree is full.

\paragraph{Random graph.}
A desirable property of a DAN algorithm would be to output a graph in which all vertices have degree exactly $\Delta$, that is, a regular graph. This way the expected path length cannot be improved by adding more edges. For generating random $\Delta$-regular graphs the the approach by \citet{kim2003} is commonly used.
It generates asymptotically uniform regular graphs for $\Delta \in O(n^{1/3 - \epsilon})$ in $O(n\Delta^2)$ time.
Of course, a necessary condition for a $\Delta$-regular graph on $n$ vertices to exist is that $n\Delta$ is even, which in our case may not be guaranteed.

The algorithm by \citet{kim2003}, in much simplification, iteratively adds random edges to an empty graph until every vertex has degree $\Delta$.
This may fail even if $n\Delta$ is even, and then the algorithm is restarted.
Since in general $n\Delta$ may not be even and a regular graph is not guaranteed to exist, we propose to simply run the algorithm by \citet{kim2003} and if it fails to produce a $\Delta$-regular graph to then simply output the graph it has computed until the point where the algorithm fails.
Such a graph is maximal in the sense that no further edges can be added without creating vertices with degree greater than $\Delta$ and should therefore be nearly regular.

A downside to this approach is that, similar to the random tree heuristic, it is oblivious to the demand graph, and furthermore may not even return a connected graph. In the case that the returned graph is not connected one may want to sample a different random graph.

\paragraph{Greedy edge selection.}
Another possible approach is to start with an empty graph and iteratively add the highest weight edges from the demand graph as long as it does not increase the degree of a vertex above $\Delta$.
This way we focus on the heaviest edges first which have a big impact on the expected path length.
However, there is no guarantee that the graph constructed in this way will be connected.
Consider for example a graph which consists of a clique on 100 vertices and one other vertex which is connected to one vertex of the clique. The edges in the clique have higher weight than the single edge outside the clique. If $\Delta < 100$ then the algorithm will first pick edges of the clique and will then be unable to add the smallest weight edge which is the edge that connects the single vertex outside the clique. In this case the expected path length will be infinite.

\paragraph{Greedy edge deletion.}
To avoid the above problem of the greedy edge selection heuristic one may also iteratively delete edges from the graph if it is incident to vertices with degree greater than $\Delta$, starting with the lowest weight edges, but not deleting an edge if it would disconnect the graph.
This way we focus on connectivity of the resulting graph, as well as keeping many high weight edges in the graph, which have a big impact on the expected path length.
Unfortunately, this approach can also fail.
Consider for example a star with $\Delta+1$ leaves. The algorithm will be unable to delete any edge without disconnecting the graph, but the maximum degree is $\Delta+1$.

Note that this algorithm has $\mathcal{O}(m^2)$ worst case running time assuming $m>n$, since checking if deleting an edge disconnects the graph takes $\mathcal{O}(m)$ time. In practise this can be sped up significantly by maintaining a spanning tree of the graph and whenever we try to delete an edge that is not in the spanning tree we know that the connectivity of the graph is unaffected. Whenever an edge from the spanning tree is deleted and it does not disconnect the graph then the spanning tree can be patched using a single edge from the graph. Finding such an edge is often significantly faster than $\mathcal{O}(m)$.

\paragraph{Hybrid edge deletion.}
The greedy edge deletion approach is able to delete edges without disconnecting the graph, but it may not be able to obtain the desired maximum degree as described above. To alleviate this we propose to simply run the \fixeddegalg{} heuristic on the graph produced by the greedy edge deletion heuristic after normalizing the edge weights so that they sum to one. 


\section{Empirical Evaluation}\label{sec:experiments}

In this section we complement our theoretical insights with empirical evaluations using datacenter traces.

\subsection{Methodology} 

All presented algorithms in this paper were implemented and used in the experiments.
Furthermore, we provide comparisons to other methods such as the algorithm for sparse distributions from Theorem 4 by~\citet{avin2020demand}\footnote{The original algorithm is formulated for directed input graphs, whereas we assume the input to be undirected.
We split an undirected edge ${u,v}$ into two directed edges $(u,v)$, $(v,u)$ each receiving half of the undirected edge's probability.}, which we will refer to as the \helperalg{} algorithm, and a $k$-root graph approach by~\citet{peres-avin}, where $k = \log \Delta(D)$ which we will refer to as the \sqrtalg{} algorithm. For a full overview of the algorithms used in the experiments see Table~\ref{table:alg-overview}.

\begin{table}[t]
\caption{Overview of the algorithms used in the experiments. The Steiner nodes column refers to whether an algorithm solves the bounded network design problem with or without Steiner nodes.}\label{table:alg-overview}
\begin{tabular}{ c|c|c|c|c}
 \toprule
 Name & Shorthand & Ref. & Steiner nodes & $\Delta$ as input \\
 \midrule
 \steinernodealg{} & \steinernodealgshort{}         & Section~\ref{sec:steiner-apx}            & yes& yes\\
 \hline
 \demandbalancingalg{} & \demandbalancingalgshort{} & Section~\ref{sec:demand-balancing-algstart}   & no & no \\ 
 \optimizedthreshalg{} & \optimizedthreshalgshort{} & Section~\ref{sec:opt-thresh-alg}         & no & no \\ 
 \sqrtalg{} & \sqrtalgshort{}                       & Split\&Join \cite{peres-avin}            & no & no \\
\helperalg{}   &      \helperalgshort{}                      & Theorem 4 \cite{avin2020demand}          & no & no \\
 \hline
 \fixeddegalg{} & \fixeddegalgshort{}               & Section~\ref{sec:fixed-degree-heuristic} & no & yes\\
 \randtreealg{} & \randtreealgshort{}              & Section~\ref{sec:misc-heuristics}        & no & yes\\
 \randgraphalg{} & \randgraphalgshort{}             & Section~\ref{sec:misc-heuristics}        & no & yes\\
 \greedydeletealg{} & \greedydeletealgshort{}       & Section~\ref{sec:misc-heuristics}        & no & yes\\
 \greedyselectalg{} & \greedyselectalgshort{}       & Section~\ref{sec:misc-heuristics}        & no & yes\\
 \hybriddeletealg{} & \hybriddeletealgshort{}       & Section~\ref{sec:misc-heuristics}        & no & yes\\
 \bottomrule
\end{tabular}
\end{table}

We note that for the \hybriddeletealg{} algorithm we have substituted the \fixeddegalg{} heuristic for a slightly more unsafe version that may produce a disconnected graph, which however never happened in the experiments. This version gives the entire maximum degree to the \steinernodealg{} and uses a modified Huffman tree in which the root is allowed to have one additional child (since the root does not have a parent and thus lower degree). This has the benefit that if the \greedydeletealg{} heuristic already computes a graph with maximum degree $\Delta$ then the \steinernodealg{} will simply output an unmodified graph. For the \fixeddegalg{} heuristic we also use the improved second step in which edges are randomly inserted whenever it does not increase the maximum degree over $\Delta$, instead of overlaying a random graph with some maximum degree $d_2$.

Our implementation was written in C++ and compiled using g++ 12.2.1 with the \texttt{-O3} optimization flag.
All experiments were run on an AMD Ryzen 5 3600 CPU on a system with the Linux 6.2 kernel.
The computation time for the expected path length was not included in any running time measurements as it often vastly exceeded the running time of all algorithms discussed in this paper.
Furthermore, we note that the trick of maintaining a spanning tree in the \greedydeletealg{} heuristic reduced the running time from over a day for some instances to at most a few minutes.

\paragraph{Dataset.}
For the experiments we use the datacenter trace collection from \citet{avin2020complexity}. 
These traces contain temporal data, specifically source, destination and timestamps of communicating nodes.
To adapt these traces to our demand graph input format, we create and edge between two communicating nodes and set its weight equal to number of times the nodes communicated. The weights are normalized so that they can be interpreted as probabilities.
In Table~\ref{table:instance-stats} some selected parameters of these instances are summarized.

\begin{table}[t]
\centering
	\caption{Properties of demand graphs in the datacenter trace collection from \citet{avin2020complexity}. The table summarizes the number of nodes, edges, minimum degree, average degree, maximum degree, entropy and conditional entropy of the demand graphs. For the entropy calculations the base-2 logarithm was used.}\label{table:instance-stats}
\pgfplotstableread[col sep = comma]{data/instances.csv}\data 
\pgfplotstabletypeset[columns={instance,n,m,mindeg,avgdeg,maxdeg,entropy,cond_entropy},
	columns/instance/.style={string type,column name=Trace,column type = {r}},
	columns/n/.style={column name=$n$,precision=2,fixed},
	columns/m/.style={column name=$m$,precision=2,fixed},
	columns/mindeg/.style={column name=$\delta$,precision=2,fixed},
	columns/avgdeg/.style={column name=$\avgdeg{}$,precision=2,fixed},
	columns/maxdeg/.style={column name=$\Delta$,precision=2,fixed},
	columns/entropy/.style={column name=entropy,precision=2,fixed},
	columns/cond_entropy/.style={column name=cond. entropy,precision=2,fixed},
	every head row/.style ={before row=\toprule, after row=\midrule},
        every last row/.style ={after row=\bottomrule}
]{\data}
\end{table}

\subsection{Results}
We divide the results of the experiments into three groups, based on the algorithms which they involve, similar to the grouping in Table~\ref{table:alg-overview}.

\paragraph{Sparse heuristics.}

We first focus our attention on the bounded network design heuristics for which the maximum degree is a function of the input demand graph and cannot be specified by the user. These are the \sqrtalg{}, \helperalg{}, \demandbalancingalg{}, and \optimizedthreshalg{} heuristics.
In Table~\ref{table:heuristics-stats} properties of the resulting host graphs computed by these algorithms on our network trace dataset are presented.
Several observations can be made.
The \helperalg{} heuristic appeared to have the highest expected path length and second largest maximum degree compared to the other three approaches.
In contrast, the \demandbalancingalg{} heuristic achieved the lowest expected path length while also having the second lowest maximum degree, except for two demand graphs.
For four of the ten demand graphs the expected path lengths for the \demandbalancingalg{} and \sqrtalg{} heuristics were almost a third of the expected path length of the \helperalg{} heuristic.
The \sqrtalg{} heuristic achieved a low expected path length, but at the cost of having a comparatively large maximum degree, almost double that of the \demandbalancingalg{} heuristic for three demand graphs.
The \demandbalancingalg{} heuristic had generally the lowest expected path length and the second smallest maximum degree, whereas \optimizedthreshalg{} had generally the smallest maximum degree and the second smallest expected path length.
From this it can be seen that there is no clear best heuristic among the four considered algorithms. A trade-off has to be made between the expected path length and the maximum degree. Since in practise the maximum degree will often be limited by hardware, one might want to consider using the \optimizedthreshalg{} as the resulting demand-aware networks had less than half of the maximum degree compared to the other heuristics, which also means it had a maximum degree of less than $2\avgdeg{}(D)$, where $D$ is the demand graph. Additionally, the expected path length of \optimizedthreshalg{} is not much worse than that of \demandbalancingalg{}.

Curiously, for three instances the \sqrtalg{} heuristic computed a demand-aware network with a maximum degree equal or greater than the maximum degree in the demand graph, with an expected path length greater than one. In this case, it would have been possible to simply output the edges of the demand graph and achieve an expected path length of exactly one.

\begin{table}[t]
\centering
\caption{Properties of the host graphs computed by the \sqrtalg{}, \helperalg{}, \demandbalancingalg{} (\demandbalancingalgshort{}), and \optimizedthreshalg{} (\optimizedthreshalgshort{}) heuristics. We denote by $\Delta$ the maximum degree in the host graph and by EPL its expected path length. For each trace we highlight in gray the lowest maximum degree and in yellow the lowest expected path length, unless there is a tie.}\label{table:heuristics-stats}
\pgfplotstableread[col sep = comma]{data/heuristics.csv}\data 
\pgfplotstabletypeset[columns={instance,sqrt-maxdeg,sqrt-epl,avin-sparse-maxdeg,avin-sparse-epl,janne-sparse-maxdeg,janne-sparse-epl,janne-sparse-v2-maxdeg,janne-sparse-v2-epl},
	every head row/.style={
		before row={
			\toprule & \multicolumn{2}{c}{\sqrtalg{}} & \multicolumn{2}{c}{\helperalg{}} & \multicolumn{2}{c}{\demandbalancingalgshort{}} & \multicolumn{2}{c}{\optimizedthreshalgshort{}} \\
		},
		after row=\midrule,
	},
    every row 0 column janne-sparse-v2-maxdeg/.style={postproc cell content/.style={@cell content=\cellcolor{lightgray}{##1}}},
    every row 1 column janne-sparse-v2-maxdeg/.style={postproc cell content/.style={@cell content=\cellcolor{lightgray}{##1}}},
    every row 2 column janne-sparse-v2-maxdeg/.style={postproc cell content/.style={@cell content=\cellcolor{lightgray}{##1}}},
    every row 3 column janne-sparse-v2-maxdeg/.style={postproc cell content/.style={@cell content=\cellcolor{lightgray}{##1}}},
    every row 5 column avin-sparse-maxdeg/.style={postproc cell content/.style={@cell content=\cellcolor{lightgray}{##1}}},
    every row 7 column sqrt-maxdeg/.style={postproc cell content/.style={@cell content=\cellcolor{lightgray}{##1}}},
    %
    %
    every row 0 column janne-sparse-epl/.style={postproc cell content/.style={@cell content=\cellcolor{yellow}{##1}}},
    every row 1 column janne-sparse-epl/.style={postproc cell content/.style={@cell content=\cellcolor{yellow}{##1}}},
    every row 2 column janne-sparse-epl/.style={postproc cell content/.style={@cell content=\cellcolor{yellow}{##1}}},
    every row 3 column janne-sparse-epl/.style={postproc cell content/.style={@cell content=\cellcolor{yellow}{##1}}},
    every row 5 column janne-sparse-epl/.style={postproc cell content/.style={@cell content=\cellcolor{yellow}{##1}}},
	columns/instance/.style={string type,column name=Trace,column type = {r}},
	columns/avin-sparse-maxdeg/.style={column type=|r,column name=$\Delta$,precision=2,fixed},
	columns/avin-sparse-epl/.style={column name=EPL,precision=2,fixed},
	columns/janne-sparse-maxdeg/.style={column type=|r,column name=$\Delta$,precision=2,fixed},
	columns/janne-sparse-epl/.style={column name=EPL,precision=2,fixed},
	columns/sqrt-maxdeg/.style={column type=|r,column name=$\Delta$,precision=2,fixed},
	columns/sqrt-epl/.style={column name=EPL,precision=2,fixed},
 	columns/janne-sparse-v2-maxdeg/.style={column type=|r,column name=$\Delta$,precision=2,fixed},
	columns/janne-sparse-v2-epl/.style={column name=EPL,precision=2,fixed},
    every last row/.style ={after row=\bottomrule}
]{\data}
\end{table}

\paragraph{Fixed degree heuristics.}
In the above we have described the computational results for the heuristics is that the maximum degree is determined by the demand graph, and cannot be freely chosen.
We next compare different heuristics for which the maximum degree is an input parameter.
In Figure~\ref{fig:fixed-deg-average} the average expected path length over our whole dataset is plotted against the maximum degree desired for different heuristics.
It is apparent that the two heuristics that are oblivious to the demand graph, namely the \randtreealg{} and \randgraphalg{} heuristics have the worst expected path length, which was expected.
The \hybriddeletealg{} and \fixeddegalg{} heuristics appear to work equally well, with the \hybriddeletealg{} producing a slightly better output for small~$\Delta$, and slightly worse for bigger $\Delta$.

\begin{figure}[t]
 \centering
	\begin{tikzpicture}[scale=1]
		\begin{axis}[
			ylabel={Average EPL},
			xlabel={Maximum degree $\Delta$},
			grid,
			xmode=log,
			xtick={8,16,32,64,128},
            ytick={1,3,5,7},
			ymin=1,
			log ticks with fixed point
		]
  
		\addplot table[col sep=comma,y=rdat,x=b] {data/epl-vs-deg.csv};
		\addlegendentry{\randtreealgshort{}}
		\addplot table[col sep=comma,y=rdrg,x=b] {data/epl-vs-deg.csv};
		\addlegendentry{\randgraphalgshort{}}
      	\addplot table[col sep=comma,y=hybrid-delete,x=b] {data/epl-vs-deg.csv};
		\addlegendentry{\hybriddeletealgshort{}}
  		\addplot table[col sep=comma,y=fixed-deg,x=b] {data/epl-vs-deg.csv};
		\addlegendentry{\fixeddegalgshort{}}
		\end{axis}
	\end{tikzpicture}
	\caption{Average expected path length over the whole datacenter trace dataset depending on the maximum desired degree $\Delta$. The results for the \randtreealg{}, \randgraphalg{}, \hybriddeletealg{}, and \fixeddegalg{} heuristics are depicted. Not depicted are the results for the \greedydeletealg{} and \greedyselectalg{} heuristics as these sometimes failed. As some heuristics are randomized, each heuristic was ran 10 times on each input and its average expected path length was taken.}\label{fig:fixed-deg-average}
\end{figure}

Recall, that the fixed-degree heuristic is in fact a combination of two heuristics: a random graph based heuristic and one based on the \steinernodealg{} algorithm.
The maximum degree of the fixed-degree heuristic can be almost freely distributed among the two sub-heuristics.
In initial experiments we investigated ways to split the degree that results in the lowest expected path length.
There are three main ways: giving most of the degree to the \steinernodealg{} heuristics, or giving most of the maximum degree to the \randgraphalg{}, or splitting it evenly.

In our testing, giving the regular graph subheuristic a maximum degree of 3 and the remainder to the Steiner node based subheuristic appeared to provide the best results. This is also apparent from Figure~\ref{fig:fixed-deg-average}; the \randgraphalg{} heuristics has a higher expected path length than the \fixeddegalg{} heuristic with the configuration where the \steinernodealg{} algorithm gets most of the maximum degree. Therefore giving most of the maximum degree to the \steinernodealg{} heuristic is more beneficial for the expected path length.

Unsurprisingly, the \randtreealg{} heuristic is significantly worse than for example the \randgraphalg{} heuristic. The randomly constructed $(\Delta-1)$-ary tree has many leaves which by definition have a degree of one, and are thus poorly utilized as more edges could be inserted into the graph to improve the expected path length.

In Table~\ref{table:fixed-deg-heuristics-stats} we present results from Figure~\ref{fig:fixed-deg-average} for the case $\Delta = 32$ in more detail, and also include some results for heuristics which failed do solve some instances.
For the facebook instances the \fixeddegalg{} heuristic has the lowest expected path length. For some demands the heuristics achieve an expected path length of almost one, whereas the \textsc{Random Graph} and \textsc{Tree} heuristics have at least double the expected path length, which is not surprising as these two heuristics are oblivious to the input. The \greedydeletealg{} and \greedyselectalg{} fail on the facebook instances. Upon closer analysis of the structure of the facebook instances the reason becomes apparent. All three instances contain a central core of about 2\% of total vertices which almost form a clique. All remaining vertices are only connected to this core and do not communicate with each other. The edges within the core have a significantly higher weight compared to the edges outside of the core. The \greedyselectalg{} will then simply take edges within the core first and be then unable to take edges connecting the other vertices, making the graph disconnected. The \greedydeletealg{} heuristic fails as sometimes some vertices in the core have many degree-1 vertices adjacent to it, making it impossible to lower the maximum degree without disconnecting the graph.
Note that whenever the \greedydeletealg{} heuristic solves an instance then the expected path length is equal to that of the \hybriddeletealg{} heuristic, which is by design (recall the modification from the start of this Section).

\begin{table}[t]
\centering
\caption{Table summarizing the expected path length of the host graphs computed by the \fixeddegalg{}{}, \randtreealg{}, \randgraphalg{}, \greedydeletealg{}, \greedyselectalg{}, and \hybriddeletealg{} heuristics for $\Delta=32$. We denote by EPL the expected path length. If an algorithm failed on some instance then it is replaced with a "---" symbol. For each trace we highlight in bold the lowest expected path length, unless there is a tie.}\label{table:fixed-deg-heuristics-stats}
\pgfplotstableread[col sep = comma]{data/fixed-deg-heuristics.csv}\data 
\pgfplotstabletypeset[columns={instance,fixed-deg-32-epl,hybrid-delete-32-epl,rdrg-32-epl,rdat-32-epl,greedy-delete-32-epl,greedy-select-32-epl},
	every head row/.style={
		before row={
			\toprule \\
		},
		after row=\midrule,
	},
    every row 0 column fixed-deg-32-epl/.style={postproc cell content/.style={@cell content=\cellcolor{lightgray}{##1}}},
    every row 1 column fixed-deg-32-epl/.style={postproc cell content/.style={@cell content=\cellcolor{lightgray}{##1}}},
    every row 2 column fixed-deg-32-epl/.style={postproc cell content/.style={@cell content=\cellcolor{lightgray}{##1}}},
    every row 8 column greedy-select-32-epl/.style={postproc cell content/.style={@cell content=\cellcolor{lightgray}{##1}}},
    %
    %
	columns/instance/.style={string type,column name=Trace,column type = {r}},
	columns/fixed-deg-32-epl/.style={column type=|r,column name=$\fixeddegalgshort{}$,precision=2,fixed},
 	columns/hybrid-delete-32-epl/.style={column type=|r,column name=$\hybriddeletealgshort{}$,precision=2,fixed},
   	columns/rdrg-32-epl/.style={column type=|r,column name=$\randgraphalgshort{}$,precision=2,fixed},
    columns/rdat-32-epl/.style={column type=|r,column name=$\randtreealgshort{}$,precision=2,fixed},
    columns/greedy-delete-32-epl/.style={column type=|r,column name=$\greedydeletealgshort{}$,precision=2,fixed},
    columns/greedy-select-32-epl/.style={column type=|r,column name=$\greedyselectalgshort{}{}$,precision=2,fixed},
    every last row/.style ={after row=\bottomrule},
    empty cells with={---}
]{\data}
\end{table}

\paragraph{Steiner nodes.}
Our last experiments involve the \steinernodealg{} algorithm. In Table~\ref{table:steiner-stats} we also present computation results for the algorithm.
Recall, that the \steinernodealg{} algorithm is allowed to use additional nodes (Steiner nodes) that are not present in the demand graph.
For a relatively low maximum degree of eight the number of new nodes used by the algorithm is sometimes even greater than $30n$.
As the maximum degree increases, the expected path length decreases and so does the number of used Steiner nodes.
For some traces the number of Steiner nodes remains quite high. For example, for $\Delta=64$ on the second facebook trace the algorithm used $2.7n$ Steiner nodes.

\begin{table}[t]
\centering
\caption{Expected path length in the resulting host graphs computed by the \steinernodealg{} algorithm for different maximum degree bounds. We denote by $n'$ the number of nodes in the host graph and by $n$ the number of nodes in the demand graph.}\label{table:steiner-stats}
\pgfplotstableread[col sep = comma]{data/heuristics.csv}\data 
\pgfplotstabletypeset[columns={instance,ADDITIVE-8-nratio,ADDITIVE-8-epl,ADDITIVE-16-nratio,ADDITIVE-16-epl,ADDITIVE-32-nratio,ADDITIVE-32-epl,ADDITIVE-64-nratio,ADDITIVE-64-epl},
	every head row/.style={
		before row={
			\toprule & \multicolumn{2}{c}{$\Delta = 8$} & \multicolumn{2}{c}{$\Delta = 16$} & \multicolumn{2}{c}{$\Delta = 32$} & \multicolumn{2}{c}{$\Delta = 64$} \\ 
		},
		after row=\midrule,
	},
	columns/instance/.style={string type,column name=Trace,column type = {r}},
	columns/ADDITIVE-8-nratio/.style={column type=|r,column name=$n'/n$,precision=1,fixed},
	columns/ADDITIVE-8-epl/.style={column name=EPL,precision=2,fixed},
	columns/ADDITIVE-16-nratio/.style={column type=|r,column name=$n'/n$,precision=1,fixed},
	columns/ADDITIVE-16-epl/.style={column name=EPL,precision=2,fixed},
	columns/ADDITIVE-32-nratio/.style={column type=|r,column name=$n'/n$,precision=1,fixed},
	columns/ADDITIVE-32-epl/.style={column name=EPL,precision=2,fixed},
	columns/ADDITIVE-64-nratio/.style={column type=|r,column name=$n'/n$,precision=1,fixed},
	columns/ADDITIVE-64-epl/.style={column name=EPL,precision=2,fixed},
        every last row/.style ={after row=\bottomrule}
]{\data}
\end{table}

Furthermore, in Figure~\ref{fig:steiner-vs-fixed-deg} we compare the \steinernodealg{} algorithm to the \fixeddegalg{} heuristic which utilizes the \steinernodealg{} algorithm, but does not use Steiner nodes, that is it computes a host graph with the same vertex set as the demand graph. It can be observed that the \steinernodealg{} algorithm achieves a slightly lower expected path length on average, but as can be seen in Table~\ref{table:steiner-stats} this is at the cost of utilizing many Steiner nodes. The \fixeddegalg{} does not utilize Steiner nodes, but the expected path length is marginally larger.
Recall that the \fixeddegalg{} utilizes two heuristics and in the last part a random graph is overlaid to fill any gaps due to nodes with degree less than $\Delta$. This overlaid random graph is not present in the \steinernodealg{} algorithm.
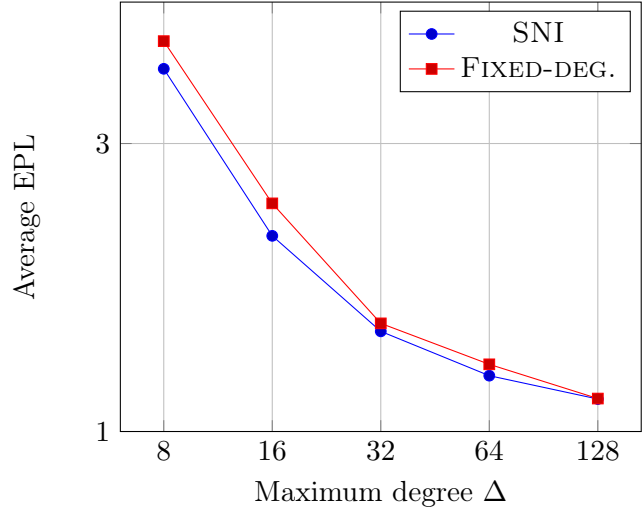
\begin{figure}
 \centering
	\begin{tikzpicture}[scale=1]
		\begin{axis}[
			ylabel={Average EPL},
			xlabel={Maximum degree $\Delta$},
			grid,
			xmode=log,
			xtick={8,16,32,64,128},
            ytick={1,3,5,7},
			ymin=1,
			log ticks with fixed point
		]
  
		\addplot table[col sep=comma,y=ADDITIVE,x=b] {data/epl-vs-deg.csv};
		\addlegendentry{\steinernodealgshort{}}
  		\addplot table[col sep=comma,y=fixed-deg,x=b] {data/epl-vs-deg.csv};
		\addlegendentry{\fixeddegalgshort{}}
		\end{axis}
	\end{tikzpicture}
	\caption{Average expected path length over the whole datacenter trace dataset depending on the maximum desired degree $\Delta$. The results for the \steinernodealg{} and \fixeddegalg{} heuristics are depicted.}\label{fig:steiner-vs-fixed-deg}
\end{figure}

\paragraph{Running times.}
Finally, we remark that the running times of all algorithms mentioned in this paper on the dataset were quite similar. Most algorithms needed 5--20 seconds for the facebook instances and at most $0.01s$ for the remaining ones. Only the \greedydeletealg{} and \hybriddeletealg{} heuristics had a relatively large running time of up to a few minutes on the facebook traces.

\section{Conclusion}\label{sec:conclusion}

This paper presented new approaches to designing bounded-degree demand-aware networks, leading to improved approximation algorithms for variants of the problem. We also reported on efficient implementations of our algorithms and presented various hardness results and lower bounds. In particular, we highlight our fixed-degree heuristic as a practical algorithm for bounded-degree network design problem; it is guaranteed to respect the given degree bound regardless of the demand distribution and, in experiments, achieves expected paths lengths close to a constant-factor approximation on real-world instances.

Our work leaves open several interesting directions for future research. In particular, 
it will be interesting to explore the tradeoff between approximation quality and resource augmentation. 
More generally, it remains open whether a constant-factor approximation algorithm exists for bounded-degree network design without the use of Steiner nodes. 

\bibliographystyle{plainnat}
\bibliography{network-design} 

\clearpage

\appendix

\section{Hardness of bounded-degree network design}\label{app:hardness}

\paragraph{Decision version of bounded-degree network design.} In this section, we show that the decision versions of bounded-degree network design, with or without Steiner nodes, is NP-complete for any fixed degree bound $\Delta \ge 2$. Formally, the decision version bounded of bounded-degree network design is defined as follows --  note that we consider an \emph{integer-weighted} version of the problem to sidestep considerations regarding encoding of small fractional number:

\begin{description}[noitemsep]
	\item[Instance:] A set $V$ with $|V| = n$, a weight function $w \colon \binom{V}{2} \to \mathbb{N}$, positive integers $\Delta$ and $K$.
	\item[Question:] Is the a graph $G = (V, E)$ such that maximum degree of $G$ is at most $\Delta$ and
	\[ \sum_{ \{u, v\} \in \binom{V}{2}} w(\{u,v\}) d_G(u,v) \le K\,, \]
where $d_G(u,v)$ denotes the distance between $u$ and $v$ in $G$.
\end{description}

The decision version of bounded-degree network design with Steiner nodes is defined analogously:

\begin{description}[noitemsep]
	\item[Instance:] A set $V$ with $|V| = n$, a weight function $w \colon \binom{V}{2} \to \mathbb{N}$, positive integers $\Delta$ and $K$.
	\item[Question:] Is there a graph $G = (V', E)$ such that $V \subseteq V'$, the maximum degree of $G$ is at most $\Delta$, and
	\[ \sum_{ \{u, v\} \in \binom{V}{2}} w(\{u,v\}) d_G(u,v) \le K\,, \]
where $d_G(u,v)$ denotes the distance between $u$ and $v$ in $G$?
\end{description}

While it is immediately clear that bounded-degree network design without Steiner nodes is in NP, it is not \emph{a priori} clear that the version with Steiner nodes always has optimal host graphs of polynomial size. However, not surprisingly this is indeed the case:

\begin{lemma}\label{lemma:steiner-node-bound}
For any instance of bounded-degree network design problem with Steiner nodes, there is a solution with optimal effective path length and at most a polynomial number of Steiner nodes.
\end{lemma}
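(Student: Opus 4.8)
The plan is to turn any bounded-cost host graph into a canonical form whose number of Steiner nodes is controlled by the cost, and then bound the optimal cost by a trivial construction.

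First I would record that the objective depends only on the distances $d_G(u,v)$ between demand nodes $u,v\in V$; pairs with $w(\{u,v\})=0$ are irrelevant. So, given any feasible host graph $G=(V',E)$ of finite cost, let $H\subseteq G$ be obtained by taking, for each demand pair $\{u,v\}$, one shortest $u$-$v$ path of $G$, and keeping also the isolated nodes of $V$. Then $H$ has maximum degree at most $\Delta$ (it is a subgraph of $G$), and $d_H(u,v)=d_G(u,v)$ for every demand pair (the chosen path survives in $H$, and $H\subseteq G$), so $H$ has exactly the same cost as $G$. An optimum is attained --- in any feasible graph every demand pair lies at distance at most the total cost, so only finitely many demand-pair distance vectors have cost below that of the trivial Hamiltonian-path solution described below, and each is realized by some such $H$ --- so we may assume an optimal $G$ is of this canonical form, and it remains to bound the number of Steiner nodes of $H$.

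Next I would bound that number. Every Steiner node of $H$ lies on one of the chosen shortest paths $P_{uv}$, which is a simple path of length $d_G(u,v)$ and hence has at most $d_G(u,v)-1$ internal vertices. Therefore the number of Steiner nodes of $H$ is at most $\sum_{\{u,v\}}d_G(u,v)\le\sum_{\{u,v\}}w(\{u,v\})\,d_G(u,v)$, using that all demand weights are positive integers; the right-hand side is precisely the cost of $G$, i.e.\ the optimum. Finally, the optimum is at most the cost of the host graph that is a Hamiltonian path on $V$ (maximum degree $2\le\Delta$), which puts every demand pair at distance at most $n-1$ and so costs at most $(n-1)\sum_e w(e)$. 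Combining, an optimal solution uses at most $(n-1)\sum_e w(e)$ Steiner nodes, which is polynomial in $n$ and the total weight --- hence polynomial in the input size whenever the weights are polynomially bounded, as they are in our hardness construction (one could alternatively take the weights in unary). This also gives membership in NP: such a polynomial-size host graph is a certificate that can be verified in polynomial time by computing its all-pairs distances and evaluating the objective.

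As an optional structural simplification I would note that in the canonical $H$ one may repeatedly \emph{contract} any Steiner node of degree $2$ (delete it and join its two neighbours by an edge): this never increases a vertex degree or a distance, so feasibility and optimality are preserved, and since such a node lies on a shortest path the contraction strictly shortens that demand pair unless the new edge was already present --- so an optimal canonical host graph contains no degree-$2$ Steiner node at all. The one subtlety worth flagging is exactly this dependence on $\sum_e w(e)$: the union-of-shortest-paths bound is only pseudo-polynomial by itself, and it is the domination by the Hamiltonian-path solution (together with the weights being polynomially bounded in the reduction) that makes the final count genuinely polynomial for the NP-completeness application.
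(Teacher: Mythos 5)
Your reduction to a union of shortest paths is sound as far as it goes (the subgraph keeps all demand distances, so it is still optimal, and every Steiner node is internal to some chosen path), but the resulting bound $\sum_{\{u,v\}} d_G(u,v) \le \sum_e w(e)\,d_G(u,v) \le (n-1)\sum_e w(e)$ is only pseudo-polynomial, and you have flagged this yourself. That caveat is exactly where the proof falls short of the statement: the lemma claims a polynomial number of Steiner nodes for \emph{any} instance, with weights being arbitrary natural numbers (binary-encoded), and it is invoked precisely to place the general Steiner-node problem in NP. Restricting to "weights polynomially bounded, as in the hardness construction" or "weights in unary" changes the problem being certified; NP-hardness of the general problem plus NP-membership of a unary-weighted restriction does not give the NP-completeness claimed in Theorem~\ref{thm:np-completeness-main}. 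So the missing ingredient is a bound on the number of Steiner nodes that is independent of the weight magnitudes.

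The paper obtains such a bound by a different mechanism: take an optimal host graph that is \emph{minimal} (no proper subgraph has the same cost), fix a demand pair $u,v$ at distance $\ell$ and a shortest $u$--$v$ path $P$, and call an edge with exactly one endpoint on $P$ \emph{critical} for a demand pair $s,t$ if deleting it increases $d(s,t)$. A shortest-path argument shows each demand pair has at most two critical edges, and minimality forces every edge leaving $P$ to be critical for some pair, so at most $n^2$ edges leave $P$. If $\ell \ge n^2+3$ there would be two non-adjacent vertices of $P$ with degree below $\Delta$, and adding an edge between them would strictly improve the cost, contradicting optimality. Hence every demand distance is $O(n^2)$ in a minimal optimum, and minimality then caps the whole graph at $O(n^4)$ nodes --- a bound in $n$ alone. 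This weight-independent distance bound (via minimality and the critical-edge count) is the idea your argument is missing; without it, the "distance at most the cost" estimate is the best your canonical form yields, and it can be exponential in the input size.
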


\begin{proof}
For $\Delta = 2$, one can immediately observe that there is a solution with optimal effective path length without Steiner nodes, as we can replace any degree-$2$ Steiner node with an edge. Thus, assume $\Delta \ge 3$ and let $G = (V \cup U,E)$ be a host graph with optimal effective path length for an instance of bounded-degree network design with Steiner nodes. Moreover, assume that $G$ is minimal in the sense that any proper subgraph of $G$ has larger effective path length.

Let $u, v \in V$ be arbitrary pair of nodes with non-zero demand between them and distance $\ell \ge 4$ in $G$, and consider the shortest path $P = (u, v_1, v_2, \dotsc, v_{l-1}, v)$ between $u$ and $v$ in $G$. Let $s,t \in V$ be an another pair of nodes with non-zero demand between them, possibly including either $u$ or $v$ but not both. We say that an edge $e$ with exactly one endpoint on $P$ is \emph{critical} for $s$ and $t$ if removing it from $G$ would increase the distance between $s$ and $t$.

Consider a shortest path $P' = (s = u_0 , u_1, u_2, \dotsc, u_{k-1}, u_k = t)$ between $s$ and $t$ that intersects with $P$, and let $u_i$ and $u_j$ be the first and the last node on $P'$ that is also on $P$, respectively. As both $P$ and $P'$ are shortest paths, the distance between $u_i$ and $u_j$ must be the same along $P$ and $P'$. It follows that only $\{ u_{i-1}, u_i \}$ and $\{ u_j, u_{j+1} \}$ can be critical for $s$ and $t$, if they exist. Moreover, a critical edge must be on all shortest paths between $s$ and $t$, so there can be at most two critical edges for $s$ and $t$.

Now consider an arbitrary edge $e \in E$ with exactly one endpoint on $P$. By minimality of $G$, deleting $e$ would increase the distance between some pair $\{ s,t \} \subseteq V$; thus, edge $e$ is critical for some $s$ and $t$. By choice of $e$, it follows that all edges with exactly one endpoint on $G$ are critical for some $s$ and $t$. By above argument, there are at most $2\binom{n}{2} \le n^2$ critical edges, and thus at most that many edges with exactly one endpoint on $P$.

Assume for the sake of contradiction that $\ell \ge n^2 + 3$. Since there are at most $n^2$ edges with exactly one endpoint on $P$, there exist two non-adjacent nodes on $P$ that have degree less than $\Delta$. Adding an edge between these two nodes to $G$ would decrease the effective path length, which contradicts the optimality of $G$. Moreover, since $u$ and $v$ were chosen arbitrarily, it holds that for all pairs $u,v \in V$ with non-zero demand the distance between $u$ and $v$ in $G$ is $O(n^2)$. By minimality of $G$, it follows that $G$ has at most $O(n^4)$ nodes.
\end{proof}

\paragraph{\texorpdfstring{\boldmath  NP-completeness for $\Delta = 2$.}{NP-completeness for Δ = 2.}} For degree bound $\Delta = 2$ one can easily observe that, for both versions of the bounded-degree network problem, there is an optimal host graph that is a collection of cycles. Moreover, it is clear that Steiner nodes cannot improve the solution for $\Delta = 2$.

Avin et al.~\cite{avin2020demand} have noted that for \emph{connected} demand graphs, the bounded-degree network design problem for $\Delta = 2$ is very close to minimum linear arrangement. More precisely, we can reduce from the \emph{undirected circular arrangement problem}, known to be NP-complete~\cite{liberatore2002circular}, to prove that bounded-degree network design problems are NP-complete:
\begin{description}[noitemsep]
	\item[Instance:] Graph $G = (V,E)$ with $|V|= n$, a weight function $w \colon E \to \mathbb{N}$, a positive integer $K$.
	\item[Question:] Is there a bijection $f \colon V \to \{ 1, 2, \dotsc, n \}$ such that
		\[ \sum_{ e \in E} w(e) h(e) \le K\,, \]
		where $h(\{u, v\}) = \min\bigl( f(u) - f(v) \mod n, f(v) - f(u) \mod n )$\,.
\end{description}

The circular arrangement problem can be seen as an instance of bounded-degree network design where the host graph is required to be a cycle. In particular, circular arrangement is equivalent to bounded-degree network design with $\Delta = 2$ on connected instances, as in that case the host graph is also required to be connected -- i.e., a path or a cycle. Thus, the NP-completeness for both versions of bounded-degree network design follows from the following observation:

\begin{lemma}\label{lemma:ca-connected}
Undirected circular arrangement problem is NP-complete when restricted to connected graphs.
\end{lemma}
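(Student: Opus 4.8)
The plan is to establish membership in NP and then give a polynomial-time many-one reduction from the general undirected circular arrangement problem — which is NP-complete by~\cite{liberatore2002circular} — to its restriction to connected graphs. Membership in NP is immediate: a bijection $f$ is a polynomial-size certificate and the objective $\sum_{e\in E} w(e)\,h(e)$ is evaluable in polynomial time, so I would dispense with this in one sentence.

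For the reduction I would take an instance $(G,w,K)$ with $G=(V,E)$, $|V|=n$, and connected components $C_1,\dots,C_k$; if $k\le 1$ the instance is already connected, so output it unchanged. Otherwise, the key idea is to add only \emph{edges}, not vertices, so that a circular arrangement of the new graph is literally a bijection $f\colon V\to\{1,\dots,n\}$ exactly as before and the circular distances $h_f(e)$ of the original edges are unchanged. Concretely: pick a representative $v_i\in C_i$ and form $G'=(V,E')$ with $E' = E\cup\{\{v_1,v_2\},\dots,\{v_{k-1},v_k\}\}$, which is connected since each $C_i$ is internally connected and the added path links all the components. Then scale every original weight by $M:=2n^2$, give each new ``connector'' edge weight $1$, and set $K':=MK+n^2$. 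This construction is plainly polynomial time.

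Correctness follows from a weight-separation argument. Writing $c_G(f)=\sum_{e\in E} w(e)h_f(e)$ for the objective of $G$ under $f$, one has $c_{G'}(f) = M\,c_G(f) + S(f)$ for every bijection $f$, where $S(f):=\sum_{i=1}^{k-1} h_f(\{v_i,v_{i+1}\})$ is the connector cost and satisfies $0\le S(f)\le (k-1)\lfloor n/2\rfloor < n^2 < M$. Hence a witness $f$ with $c_G(f)\le K$ for $(G,w,K)$ gives $c_{G'}(f)\le MK+n^2=K'$; conversely, a witness $f$ for $(G',w',K')$ gives $M\,c_G(f)\le c_{G'}(f)\le MK+n^2$, so $c_G(f)\le K+n^2/M < K+1$, and since $c_G(f)$ is a nonnegative integer this forces $c_G(f)\le K$.

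I do not anticipate a serious obstacle. The two points that need care are (i) that we add edges rather than vertices, so the host cycle stays on exactly $n$ positions and the original distances are preserved verbatim, and (ii) that $M$ is chosen strictly larger than the maximum possible connector cost (any $M>n^2$ works) while remaining polynomially bounded, so that the fractional slack $n^2/M$ rounds away. The degenerate cases ($k\le 1$, in particular $n\le 1$ or $G$ edgeless with one component) are handled trivially by the ``output unchanged'' branch, and when $G$ is edgeless with several components the same construction simply turns it into a path.
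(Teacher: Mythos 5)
Your proof is correct and takes essentially the same route as the paper's: reduce from general undirected circular arrangement by adding negligible unit-weight edges to force connectivity, scaling the original weights so the added edges cannot affect the threshold decision, and using integrality of the objective to absorb the slack. The only difference is cosmetic — the paper completes the graph and scales by $n^3$ with $K' = n^3(K+1)-1$, while you add a path of $k-1$ connector edges and scale by $2n^2$ — and both variants are sound.
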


\begin{proof}
Membership in NP is trivial. For NP-hardness, we reduce from undirected circular arrangement on general inputs.

Let $G = (V,E)$, $w$ and $K$ be an instance of undirected circular arrangement.
We construct a new instance by letting $G' = (V, E')$ be a complete graph on $V$, and define the new edge weight function~$w'$ as
\[ w'(e) = \begin{cases}
	n^3 w(e) & \text{if $e \in E$, and }\\
	1        & \text{if $e \notin E$.}
\end{cases} \]
Finally, we set $K' = n^3 (K + 1) - 1$. Clearly $G'$ is a connected graph.

First consider the case where $G$ is a yes-instance, and let $f$ be a solution satisfying $\sum_{ e \in E} w(e) h(e) \le K$. Using $f$ as a solution for $G'$ and summing over the edges in $E$, we have that $\sum_{ e \in E} w'(e) h(e) \le n^3 K$. For edges not in $E$, we have
\[\sum_{ e \notin E} w'(e) h(e) \le \binom{n}{2} n < n^3\,.\]
Thus, 
\[\sum_{ e \in E'} w'(e) h(e) = \sum_{ e \in E} w'(e) h(e) + \sum_{ e \in E'} w'(e) h(e) < n^3 K + n^3\,,\]
which is at most $n^3 (K + 1) - 1$ as the value of the sum is an integer. Hence, $G'$ is a yes-instance.

For the other direction, assume that $G'$ a yes-instance, and let $f$ be a solution satisfying $\sum_{ e \in E'} w'(e) h(e) \le K'$. This implies that
\[ \sum_{ e \in E} w'(e) h(e) \le K' = n^3 (K+1) - 1\,.\]
Since all edge weights $w'(e)$ for $e \in E$ are divisible by $n^3$, the value of $\sum_{ e \in E} w'(e) h(e)$ is also divisible by $n^3$. This combined with above inequality implies
\[ \sum_{ e \in E} n^3 w(e) h(e) \le n^3 K\,,\]
and thus we have 
\[ \sum_{ e \in E} w(e) h(e) \le K\,.\]
Hence, $G$ is a yes-instance.
\end{proof}

\paragraph{\texorpdfstring{\boldmath  NP-completeness for $\Delta \ge 3$.}{NP-completeness for Δ ≥ 3.}} We prove NP-hardness for both variants of the bounded-degree network design problem. There are two main tricks we utilize in this proof, which we explain before proceeding with the proof proper.

First is the observation that when we construct instances for these problems, we can force any optimal host graph to have some specific edge $\{ u, v \}$ we desire by giving this pair a large demand $p(\{ u,v \}) = W$. If we then have $k$ edges we want to force, we can set the cost threshold of the instance to be between $kW$ and $kW-1$, so that any valid host graph will necessarily include all of these edges.

The second trick is used to ensure that nodes in our gadget constructions have the desired degrees. We use the following simple lemma:

\begin{lemma}\label{lemma:bnd-np-gadgets}
Let $d \ge 3$ be an odd integer. There is a graph $H_d = (V,E)$ such that $|V| = d+2$, one node in $H_d$ has degree $d-1$, and all other nodes in $H_d$ have degree $d$.
\end{lemma}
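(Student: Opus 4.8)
The plan is to build $H_d$ by deleting edges from the complete graph $K_{d+2}$, which is $(d+1)$-regular on $d+2$ vertices. Since we want one vertex of degree $d-1$ and $d+1$ vertices of degree $d$, it suffices to exhibit a subgraph $F \subseteq K_{d+2}$ in which the designated vertex has degree exactly $2$ and every other vertex has degree exactly $1$, and then set $H_d = K_{d+2} - F$.

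First I would fix the designated vertex $v_0$ together with two further vertices $v_1, v_2$, and place the edges $\{v_0,v_1\}$ and $\{v_0,v_2\}$ into $F$; this already gives $v_0$ degree $2$ and $v_1, v_2$ degree $1$ in $F$. The remaining $d-1$ vertices $v_3, \dots, v_{d+1}$ must each receive exactly one more edge of $F$, so I would pair them off by an arbitrary perfect matching among themselves. This is possible precisely because $d-1$ is even, and this is where oddness of $d$ is used: if $d$ were even, the prescribed degree sequence $(d-1, d, \dots, d)$ would have odd sum, violating the handshake lemma, so no such graph could exist at all.

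Finally I would check the degrees directly: in $H_d = K_{d+2} - F$ the vertex $v_0$ has degree $(d+1) - 2 = d-1$, while every other vertex has degree $(d+1) - 1 = d$, and clearly $|V(H_d)| = d+2$. The edges of $F$ are pairwise distinct since they use distinct vertex pairs, so $F$, and hence $H_d$, is a simple graph; and the construction is well defined because $d \ge 3$ guarantees at least $d+2 \ge 5$ vertices and at least $d - 1 \ge 2$ vertices available to match.

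I do not expect any genuine obstacle here: the lemma is a one-line construction once phrased as "delete a suitable subgraph from $K_{d+2}$", and the only point requiring care is the parity observation that both justifies the matching step and explains the odd-$d$ hypothesis. If the gadget application additionally needs $H_d$ to be connected, that comes for free, since $K_{d+2}$ with only $O(d)$ edges removed is still connected for all $d \ge 3$; I would note this but it is not part of the stated claim.
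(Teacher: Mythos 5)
Your construction is correct and is essentially the paper's construction viewed complementarily: the paper takes $K_{d+1}$, removes a matching of size $(d-1)/2$, and attaches a new apex vertex to the $d-1$ matched vertices, which yields exactly the graph you get by deleting from $K_{d+2}$ the two edges at $v_0$ together with a perfect matching on the remaining $d-1$ vertices. Your parity observation (which also explains the odd-$d$ hypothesis) plays the same role as the paper's choice of matching size, so there is nothing to add.
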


\begin{proof}
We start from a $(d+1)$-clique $K_{d+1}$, and then remove an arbitrary matching of size $(d-1)/2$. The resulting graph has $d-1$ nodes of degree $d-1$, and 2 nodes of degree $d$. We then add a new node and connect it with an edge to all the nodes of degree $d-1$. The resulting graph clearly has the claimed properties.
\end{proof}

We use Lemma~\ref{lemma:bnd-np-gadgets} to do what we will henceforth call \emph{attaching a degree-blocking gadget}: given a graph of odd maximum degree $\Delta$, some value $d \leq \Delta$, and a node $v$ with degree $\deg(v) < d$, we turn $v$ into a degree-$d$ node by adding $d-\deg(v)$ copies of $H_\Delta$ to the graph, and adding an edge between $v$ and degree-$(\Delta-1)$ node in each copy of $H_\Delta$. Clearly then $v$ will have degree $d$, and each added node will have degree exactly $\Delta$.

We now proceed to the main proof. For simplicity, we start with the case of odd maximum degree $\Delta$.

\begin{theorem}\label{thm:bnd-hardness-odd}
Bounded-degree network design and bounded-degree network design with Steiner nodes are NP-hard for any odd $\Delta \ge 3$.
\end{theorem}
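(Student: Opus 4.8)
\emph{Proof plan.} The plan is to reduce from \textsc{Vertex Cover} on $3$-regular graphs, which is well known to be NP-hard. Given a cubic graph $G=(V_G,E_G)$ with $|V_G|=n$ and $m=|E_G|=3n/2$, together with a target size $k$ (we may assume $k\ge 2$), I would build a weighted instance $(V,w,\Delta,K)$ of the integer-weighted decision version whose yes-instances are exactly the vertex covers of $G$ of size at most $k$, and then observe that the very same instance also settles the Steiner-node variant.

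\emph{The construction.} Let $h:=\lceil\log_{\Delta-1}k\rceil$ and give every ``structural'' edge described below the common huge weight $W:=(h+3)m+1$; this realises the first trick, so that any host graph of total weight below $(F+1)W$, where $F$ is the number of structural edges, must contain all of them. The structural part consists of: (i)~an \emph{access tree} $T_r$ rooted at $r$, in which $r$ and all internal nodes have degree $\Delta$ via structural edges, and which has at least $k$ leaves at depth exactly $h$, of which $k$ are singled out as \emph{ports}; (ii)~for each $v\in V_G$ a \emph{vertex hub} $x_v$ joined by a depth-$2$ structural tree to the three terminals $t_e$ of the edges $e\ni v$, so that $d(x_v,t_e)=2$ whenever $v\in e$; (iii)~the unit demands $w(\{r,t_e\})=1$, one per $e\in E_G$, which are the \emph{only} non-structural demands. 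Finally, using the degree-blocking gadgets of Lemma~\ref{lemma:bnd-np-gadgets} (which exist precisely because $\Delta$ is odd), I attach dead-end gadgets through structural edges so that in the structural graph every node already has degree $\Delta$, the sole exceptions being the $k$ ports and the $n$ hubs $x_v$, each of which retains exactly one free degree slot; in particular each terminal $t_e$ is padded up to degree $\Delta$, and so is $r$. All of this has size polynomial in $n$. Set $K:=FW+(h+3)m$.

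\emph{Correctness.} For the ``if'' direction, from a vertex cover $S$ with $|S|\le k$ I take the host graph made of all structural edges plus one edge between $\mathrm{port}_i$ and $x_{w_i}$ for an injective enumeration $S=\{w_1,\dots,w_{|S|}\}$; the degree bound holds, and for every $e=\{u,v\}$ some endpoint, say $u$, lies in $S$, giving a path $r\to\cdots\to\mathrm{port}\to x_u\to t_e$ of length exactly $h+3$ (the first $h$ edges inside $T_r$), so the total weight is $FW+(h+3)m=K$; no Steiner nodes are used, so this works in both variants. For the ``only if'' direction, let $H$ be any host graph, possibly with Steiner nodes, of total weight $\le K$. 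The weight bound forces $H$ to contain every structural edge, so the only nodes of $H$ that can carry a non-structural edge are the ports and the hubs. Since $r$ and every internal node of $T_r$ are already saturated, any $r$--$t_e$ path must first descend $T_r$ to a port (cost $h$) and then leave through that port's single free edge; a short case analysis on where that free edge leads (to a hub $x_w$; to another port, which is a dead end; to a gadget node, also a dead end; or to a Steiner node, which merely inserts an extra interior vertex) shows $d_H(r,t_e)\ge h+3$ for every $e$, with equality only when the port is joined \emph{directly} to a hub $x_w$ with $w\in e$. Since $\sum_e d_H(r,t_e)\le K-FW=(h+3)m$, every $d_H(r,t_e)$ equals $h+3$, so $S:=\{w: x_w\text{ is adjacent in }H\text{ to a port}\}$ meets every edge of $G$, and $|S|\le k$ because each port has only one free slot (hence at most one incident hub) and each hub is adjacent to at most one port, giving an injection from $S$ into the set of ports.

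\emph{Main obstacle.} The delicate point is making the lower bound $d_H(r,t_e)\ge h+3$ watertight: I must verify that the degree-blocking gadgets genuinely saturate every node other than the $k$ ports and the $n$ hubs, so that no shortcut edge is ever available, in particular none incident to a terminal $t_e$ or to $r$; and that Steiner nodes cannot help, which holds because a Steiner node can only occur as an extra interior vertex of an $r$--$t_e$ path and therefore strictly lengthens it. The rest — checking all degrees are at most $\Delta$ and counting the polynomially many structural edges — is routine.
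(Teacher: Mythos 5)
Your proposal is correct and follows essentially the same route as the paper's proof: a reduction from vertex cover on $3$-regular graphs using large weights to force structural edges, degree-blocking gadgets (odd $\Delta$) to saturate all nodes except $k$ port/selector nodes and the vertex hubs, unit demands from the root to per-edge terminals, and a threshold forcing every root--terminal distance to equal the tree depth plus~$3$, with the same observation that Steiner nodes only lengthen such paths so the reduction covers both variants. The only differences are cosmetic (a $(\Delta-1)$-ary access tree of depth $\lceil\log_{\Delta-1}k\rceil$ instead of the paper's complete binary selector tree with $2^{\lceil\log_2 k\rceil}$ leaves, and an equivalent depth-$2$ hub-to-terminal tree).
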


\begin{proof}
We reduce from vertex cover on $3$-regular graphs; recall that the problem is known to remain NP-complete even with the degree restriction~\cite{garey-johnson, garey1976some}. More precisely, we will give a reduction that has the additional property that the resulting instance is a yes-instance without Steiner nodes if and only if it is a yes-instance with Steiner nodes, thus proving NP-hardness for both variants of the problem.

Let $(V,E,k)$ be an instance of vertex cover (in its decision form, where we ask if there is a vertex cover of size $k$). We first define the various gadget graphs we will use in the the construction.

\smallskip

\emph{Selector gadget.} Let $b = 2^{\lceil \log_2 k \rceil}$ be the smallest power of two larger than $k$. To construct the \emph{selector gadget}, we start with a complete binary tree with $b$ leaves and root node $r$. We arbitrarily select $k$ leaves as the \emph{selector nodes} $s_1, s_2, \dotsc, s_k$, and attach $\Delta-2$ degree-blocking gadgets $H_\Delta$ to each $s_i$. We also attach $\Delta-2$ degree-blocking gadgets $H_\Delta$ to the root $r$, and $\Delta-1$ degree-blocking gadgets to the remaining leaves. One can now verify that each selector node has degree $\Delta-1$, and the remaining nodes in the selector gadget have degree $\Delta$.

\smallskip

\emph{Vertex gadgets.} For each vertex $v \in V$ in the vertex cover instance, we construct a \emph{vertex gadget} as follows -- recall that by assumption $v$ has degree $3$. We take a complete binary tree with $4$ leaves, denoting the root of the tree by $r_v$. We identify $3$ of the leaves as \emph{terminal nodes $t_e$} for edges $e$ incident to node $v$. Finally, we attach $\Delta-3$ degree-blocking gadgets $H_\Delta$ to the root $r_v$, as well as $\Delta-2$ degree-blocking gadget to each terminal node and $\Delta-1$ degree-blocking gadgets to the remaining leaf. 

Finally, we combine the vertex gadgets for all vertices in $V$ by identifying two terminal nodes for edge $e = \{ u, v \}$ in the vertex gadgets of $u$ and $v$, as well as merging the respective degree-blocking gadgets into one. We can now observe that in the graph formed by the combined vertex gadget, the root nodes $r_v$ for $v \in V$ have degree $\Delta-1$, and all other nodes have degree $\Delta$.

\smallskip

\emph{Instance construction.} We now construct an instance $(V',w, K)$ of bounded-degree network design with Steiner nodes as follows. Let $V'$ be the set of nodes in the selector gadget and the combined vertex gadgets as constructed above. For any edge $\{ u, v \} \subseteq V'$ in the gadget constructions, we set the demand $w(\{ u, v \})$ to be $W = |E|(\log_2 b + 3) + 1$. For each edge $e \in E$ in the vertex cover instance, we add demand $w(\{ r, t_e \}) = 1$ between the root of the selector gadget and the terminal node for $e$. For all other pairs, we set the demand to $0$. Finally, we set the effective path length threshold for the bounded network design instance to be $K = MW + |E|(\log_2 b + 3)$, where $M$ is the total number of edges in the gadgets.

\smallskip

\emph{Correctness.} Now assume that $(V,E,k)$ is a yes-instance of vertex cover, and let $C = \{ v_1, v_2, \dotsc, v_k \} \subseteq V$ be a vertex cover. Let $G' = (V', E')$ be a host graph for instance $(V',w,K)$ obtained by taking all the gadget edges and edges $\{ s_i, r_{v_i} \}$ for $i = 1, 2, \dotsc, k$. The demands for the gadget edges are satisfied by $G'$ with direct edges, contributing a total of $MW$ to the effective path length. Now consider an arbitrary edge $e \in E$ and a demand between $r$ and $t_e$. Since $C$ is a vertex cover, there is a vertex $v_i \in C \cap e$ such that the edge $\{ s_i, r_{v_i} \}$ belongs to $E'$. Thus there is a path from $r$ to $t_e$ in $G'$ that starts from $r$ and travels along the selector gadget to $s_i$, crosses the edge $\{ s_i, r_{v_i} \}$ and travels from $r_{v_i}$ to $t_e$ along the vertex gadget for $v_i$. This path has length $\log_2 b + 3$. Since $e$ was selected arbitrarily and $C$ is a vertex cover, this holds for all edges $e \in E$. Therefore the total effective path length is $MW + |E|(\log_2 b + 3) = K$, that is, $(V',w,K)$ is a yes-instance.

For the other direction, assume that $(V',w,K)$ is a yes-instance, and let $G' = (V' \cup U,E')$ be a host graph with effective path length at most $K$. First, we observe that $G'$ must include all gadget edges, as otherwise the effective path length from the corresponding demands is at least $(M+1)W > K$. Now consider the path between $r$ and $t_e$ for an arbitrary $e \in E$. Since all gadget edges are included in $G'$, this path must exit the selector gadget via an edge connected to one of the selector nodes $s_i$, as all other nodes in the selector gadget have degree $\Delta$ from the edges inside the gadget. By the same argument, the path must enter the vertex gadget via one the nodes $r_v$ and travel via gadget edges to $t_e$. It follows that this path must have length at least $\log_2 b + 3$. Thus, for the effective path length to be at most $K = MW + |E|(\log_2 b + 3)$, the distance between $r$ and each $t_e$ must be exactly $\log_2 b + 3$.

We now observe that for an edge $e = \{ u, v \} \in E$, by the same reasoning as above, the only way for the path from $r$ to $t_e$ to be of length $\log_2 b + 3$ is that the path travels from $r$ to a selector node $s_i$, then from $s_i$ to either $r_v$ or $r_u$ by a direct edge, and then via the vertex gadget to $t_e$. Thus, for any edge $e = \{ u, v \} \in E$, either $\{ s_i, r_v \}$ or $\{ s_i, r_u \}$ is in $E'$ for some $i = 1, 2, \dotsc, k$. Thus, the set of vertices
\[ C = \bigl\{ v \in V \colon \{ s_i, r_v \} \in E' \text{ for some $i = 1, 2, \dotsc, k$}  \bigr\} \]
is a vertex cover of $G$, and since each $s_i$ may only be incident to one non-gadget edge due to the degree constraint, we have $|C| \le k$. Thus, $(V,E,k)$ is a yes-instance of vertex cover.
\end{proof}

We can now modify the above construction for even maximum degree as follows.

\begin{theorem}
Bounded-degree network design and bounded-degree network design with Steiner nodes are NP-hard for any even $\Delta \ge 4$.
\end{theorem}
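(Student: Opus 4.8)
The plan is to reduce from vertex cover on $3$-regular graphs, reusing almost verbatim the construction of Theorem~\ref{thm:bnd-hardness-odd}, and to engineer a replacement for the degree-blocking gadget, which is the only ingredient of that proof that fails when $\Delta$ is even. The parity obstruction is that no graph has exactly one node of degree $\Delta-1$ with all others of degree $\Delta$ when $\Delta$ is even, since the degree sum would be odd. However, there \emph{is} such a graph with exactly \emph{two} deficient nodes and all others of degree $\Delta$: for instance $K_{\Delta+1}$ with a single edge removed; call it $J$. It has $\Delta+1$ nodes, and its two deficient nodes are at distance $2$. Attaching both deficient nodes of a copy of $J$ to a single node $v$ by forced edges saturates $J$ and removes exactly $2$ from the degree available at $v$; since such a copy touches the rest of the graph only through $v$, it is a pendant and never lies on a shortest path. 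So I can always remove an \emph{even} amount of available degree from any node, cleanly.

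To remove an \emph{odd} amount I would additionally build, for any polynomially bounded $L$, a ``long pendant with two ports'' $J_L$: take a path $x = y_0, y_1, \dots, y_L = x'$ and attach to each $y_i$ enough pendant copies of $J$ (always possible, since each internal $y_i$ has even deficit $\Delta-2$, and the endpoints have even deficit $\Delta-2$ once we only want them at degree $\Delta-1$) to bring every non-port node to degree $\Delta$ and the ports $x, x'$ to degree $\Delta-1$. Attaching $x$ to a node $a$ and $x'$ to a node $b$ by forced edges saturates $J_L$, removes one unit of available degree from each of $a$ and $b$, and introduces only a virtual $a$--$b$ path of length $L+2$.

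With these two tools I would rerun the construction of Theorem~\ref{thm:bnd-hardness-odd} unchanged — selector gadget with root $r$ and selector nodes $s_1,\dots,s_k$, combined vertex gadgets with roots $r_v$ and terminals $t_e$, all gadget edges forced with weight $W$, unit demands on the pairs $\{r,t_e\}$, threshold $K = M'W + |E|(\log_2 b + 3)$ — except that each instruction ``give node $v$ degree $\delta_v$'' is realized by attaching $\lfloor(\Delta-\delta_v)/2\rfloor$ pendant copies of $J$ to $v$, which leaves a residual of at most one unit still to be removed at the nodes of odd deficit. I would then make the number of such nodes even by adding at most one dummy node $z$ (zero demand, one free slot, odd deficit), pair up all residual units arbitrarily, and realize each pair by a copy of $J_L$ with $L = \lceil \log_2 b\rceil + 4$. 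The correctness analysis is then that of Theorem~\ref{thm:bnd-hardness-odd}: an optimal host graph must contain all forced edges, so every node except $s_1,\dots,s_k$, the $r_v$'s and $z$ is saturated; pendant $J$'s and the $J_L$'s (whose induced virtual edges have length $>\log_2 b + 3$) cannot lie on a shortest $r$--$t_e$ path, so every such path has length $\ge \log_2 b + 3$, with equality only if it leaves the selector gadget at some $s_i$, crosses a direct edge $\{s_i, r_v\}$ with $v \in e$, and reaches $t_e$ inside the vertex gadget; requiring all $r$--$t_e$ paths to attain this length is exactly requiring a vertex cover of size $\le k$, and conversely such a cover yields a host graph of cost $K$. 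Exactly as in the odd case, Steiner nodes cannot attach to saturated nodes and cannot shortcut the $r$--$t_e$ paths, so the instance is a yes-instance with Steiner nodes iff it is one without, establishing NP-hardness for both variants.

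The main obstacle — and the only genuine departure from Theorem~\ref{thm:bnd-hardness-odd} — is the parity argument: since a pendant can shave off only an even amount of degree, the leftover odd units must be paired across distinct gadget nodes, and one must then verify that the unavoidable virtual edges these pairings create between otherwise unrelated gadget nodes are too long to interfere with the intended $r$--$t_e$ distances. Taking $L$ slightly larger than $\log_2 b + 3$ is precisely what makes this harmless while keeping the whole instance of polynomial size.
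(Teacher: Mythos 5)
Your construction is correct, but it is a genuinely different route from the paper's. The paper handles even $\Delta \ge 4$ by a black-box doubling trick: it takes two disjoint copies of the odd-degree instance from Theorem~\ref{thm:bnd-hardness-odd} for $\Delta-1$ and adds a forced (weight-$W$) demand between each pair of corresponding nodes, so any feasible host graph must contain this perfect matching, consuming exactly one degree unit at every node; each copy then effectively lives under degree bound $\Delta-1$ and the odd-case analysis applies verbatim, with no new gadgets and essentially no new correctness argument. You instead repair the gadgetry itself: you correctly identify the parity obstruction, replace the one-port blocker $H_\Delta$ by the two-port blocker $J = K_{\Delta+1}$ minus an edge (shaving even amounts of degree), and pair up the residual odd deficits across nodes via the long two-port pendant $J_L$, with $L$ chosen larger than $\log_2 b + 3$ so the induced virtual paths (and the unavoidable leftover free slot at the dummy node $z$, which can only lead onward into a $J_L$) cannot shorten any $r$--$t_e$ distance or be part of a path of length exactly $\log_2 b + 3$; the rest of the analysis, including the Steiner-node argument, then carries over as you say. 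The trade-off: the paper's reduction is shorter and fully modular, while yours is self-contained at the gadget level and yields a more flexible tool --- per-node degree budgets for even $\Delta$ directly --- at the cost of extra bookkeeping (parity pairing, the dummy node, and re-verifying non-interference of the $J_L$ connections), all of which you address adequately.
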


\begin{proof}
Let $\Delta \ge 4$ be fixed and even. We start by taking the reduction given by Theorem~\ref{thm:bnd-hardness-odd} for $\Delta-1$, and modify it as follows. Let $N$ be the total number of nodes in the instance $(V',w,K)$ produced by the reduction, and let $M$ be the number of forced edges in the gadgets, i.e., the number of edges with demand $W$.

We now create the instance for degree $\Delta$ by taking two disjoint copies of the node set and demands of the degree-$(\Delta-1)$ instance $(V',w,K)$, and adding a demand $W$ for each pair of corresponding nodes from the two copies. All other demands for pairs from different copies is set to $0$. Finally, the effective path threshold is set to $2(MW + N)  + 2|E|(\log_2 b + 3)$.

If the vertex cover instance was a yes-instance, then the degree-$\Delta$ instance has a solution of cost $2(MW + |E|(\log_2 b + 3)) + NW$, obtained by taking the solution for degree-$(\Delta-1)$ instance for both copies, and adding all the edges between corresponding nodes in the two copies. On the other hand, if the degree-$\Delta$ instance is a yes-instance, then all high-weight edges between the two copies must be included in the solution, using up one edge for each node in the instance. Each copy of the degree-$(\Delta-1)$ instance must have effective path length of $MW + |E|(\log_2 b + 3)$ for its internal demands, which by same argument as in the proof of Theorem~\ref{thm:bnd-hardness-odd} is only possible if the original vertex cover instance is a yes-instance.
\end{proof}

Putting together all the results from this section, we obtain Theorem~\ref{thm:np-completeness-main}.

\end{document}